\documentclass[12pt,journal,onecolumn]{IEEEtran}
\usepackage{amsmath,amsfonts}
\usepackage{algorithmic}
\usepackage{algorithm}
\usepackage{array}
\usepackage[caption=false,font=normalsize,labelfont=sf,textfont=sf]{subfig}
\usepackage{textcomp}
\usepackage{stfloats}
\usepackage{url}
\usepackage{verbatim}
\usepackage{graphicx}
\usepackage{cite}
\usepackage{multirow}
\usepackage{tabularx}
\usepackage{booktabs}

\usepackage{amssymb}
\usepackage{amsthm}
\usepackage{booktabs}
\usepackage{enumitem}
\usepackage{xcolor}
\usepackage{hyperref}
\hypersetup{
  hypertexnames=false,
  colorlinks=true,
  allcolors=blue
}

\theoremstyle{plain}
\newtheorem{theorem}{Theorem}[section]
\newtheorem{proposition}[theorem]{Proposition}

\newtheorem{corollary}[theorem]{Corollary}

\theoremstyle{definition}

\newtheorem{definition}{Definition}

\theoremstyle{remark}

\DeclareMathOperator{\XOR}{\oplus}
\DeclareMathOperator{\SBIBD}{SBIBD}
\newcommand{\bbF}{\mathbb{F}}

\begin{document}

\title{Replica Thresholds for Stripeless Erasure Coding Based on Symmetric Block Designs}

\author{Guiwen Luo, Xiao Li, Yaqiang Zhang, and Ke Wang%
\thanks{The authors are with
IEIT SYSTEMS Co., Ltd., Jinan 250101, China.
Email: \{luoguiwen,\allowbreak lixiao09,\allowbreak
zhangyaqiang,\allowbreak wangke12\}@ieisystem.com.
Corresponding author: Xiao Li.}%
}

\markboth{IEEE Transactions on Information Theory}%
{Replica Thresholds for Stripeless Erasure Coding Based on Symmetric Block Designs}


\maketitle

\begin{abstract}
This paper investigates a fundamental question in stripeless erasure coding based on symmetric balanced incomplete block designs (SBIBDs): how many replicas per object are precisely required to guarantee recovery from any set of at most $p$ node failures? We refine the known
sufficient recovery guarantee for the generalized $\SBIBD(v,k,\lambda)$ construction. One replica is necessary and sufficient for $p=1$, and 
$q=\lambda(p-1)+2$ replicas ($q\le k$) guarantee recovery for $p\ge2$. Recovery takes one round for  $p=2$ and at most two
rounds in general. To study the tightness of this count, we define the universal replica threshold $q^\ast(A,p)$ for a fixed zero-diagonal SBIBD
representative~$A$. We determine $q^\ast(A,p)$  for $p=1$ and  $p=2$. For $p\ge3$, we identify pairwise separated sets and private target
permutations as the structures determining whether the sufficient
count is tight or can be further reduced. These conditions give exact
thresholds for all but the case where $\lambda>1$ and $A$ contains no
pairwise separated set of size~$p$. We improve the bounds for this remaining
case and leave its exact threshold open. Finally, we give sufficient conditions
for pairwise separated sets and show how affinity relabeling can realize private target permutations.
\end{abstract}

\begin{IEEEkeywords}
Distributed storage,  erasure coding, fault tolerance, symmetric balanced incomplete block designs.
\end{IEEEkeywords}

\section{Introduction}\label{sec:intro}
\IEEEPARstart{C}onventional erasure coding organizes data and parity chunks into fixed
stripes. Each data chunk has a predetermined set of companion chunks, and
recovery is performed within this fixed group. Stripeless erasure coding
removes this persistent stripe membership. Each object retains one primary
copy and sends temporary replicas to selected target nodes. A target node
combines replicas received from different primary nodes into an XOR parity
and then discards the constituent replicas. Consequently, objects can be
placed and encoded independently without first being assigned to fixed
stripes.

Removing stripe boundaries is particularly useful for distributed storage
systems that keep many small objects in memory. In such
systems, splitting objects or collecting multiple objects into stripes,
and maintaining stripe metadata can introduce coordination and data placement
overheads that are significant relative to memory and network latency.
Stripeless coding replaces global stripe construction with replication at the object level and local parity formation at each node. Its main target is object storage with low latency in which the management cost of fixed stripes is a
major concern. Conventional stripes may remain preferable for large objects
and workloads optimized for bandwidth that benefit from mature MDS coding and
sequential data access.


Gao et al.~\cite{nos2025} first introduced Nos, a stripeless data placement scheme
for distributed storage systems that keep small objects in memory. Let $p$ denote the maximum number of simultaneous node failures to be
tolerated. For each
object primary on node~$i$, Nos independently selects $p+1$ target nodes from
the $k$ nodes permitted by a zero-diagonal $\SBIBD(v,k,1)$ representative.
Each target node forms a parity by XORing one replica from each of its $k$
prescribed source nodes. This organization removes persistent stripe
membership and allows replica placement and parity formation without a
centralized metadata service. The intersection property with $\lambda=1$
supports recovery from any set of at most $p\ (p<k)$ failed nodes. However,  Nos leaves two issues unresolved. First, the flexibility of Nos is limited by its design parameters. The
condition $\lambda=1$ forces $v=k(k-1)+1$, so the number of design nodes grows quadratically with~$k$ and only certain combinations of cluster size and parity width are available. Second, Nos uses
$q=p+1$ as a uniform sufficient replica count but does not determine whether this count is tight.

FLESC~\cite{flesc2026} alleviates the parameter limitation of Nos in two
complementary ways. At the scheme level, its coding scheme replaces
$\SBIBD(v,k,1)$ by $\SBIBD(v,k,\lambda)$ with a general admissible value of
$\lambda$. It proves that
$q=\lambda(p-1)+2$
replicas per object suffice when $k\ge\lambda(p-1)+2$. When $\lambda=1$,
this count becomes $p+1$ and the scheme reduces to Nos. Allowing larger
values of~$\lambda$ changes the design size to
$v={k(k-1)}/{\lambda}+1$,
which expands the available parameter choices when the corresponding SBIBDs exist. At the deployment level, FLESC introduces a virtual node layer that separates the design nodes from the physical storage nodes. Nevertheless, the threshold
question remains unresolved.  Like Nos, FLESC does not investigate whether its sufficient replica count is
tight or can be reduced for a fixed SBIBD representative.

This paper focuses on determining the exact universal replica threshold
for recovery from any set of at most $p$ node failures. For a fixed zero-diagonal SBIBD representative~$A$, we define the
universal replica threshold $q^*(A,p)$ as the minimum replica count that
guarantees recovery for every legal placement and every failure set of at most size~$p$. Compared to FLESC, we sharpen its sufficient recovery guarantee by treating $p=1$, $p=2$ and $p\ge 3$ separately and by making the recovery rounds explicit. We
then determine when this sufficient replica count is tight and when it can be reduced. The resulting classification shows that the threshold depends not only on the 
$\SBIBD(v, k, \lambda)$, but also on structural properties of the chosen representative matrix.

Our main contributions in this paper are as follows.

\medskip\noindent\textbf{Contributions.}
\begin{itemize}[nosep]
  \item We refine the sufficient recovery guarantee established
    in FLESC. For $p=1$, we prove that one replica per object is
    necessary and sufficient. For $p=2$, we show that every lost object
    is directly recoverable under the sufficient replica count. For $p\ge3$, we make the two round recovery structure explicit and give
    a more rigorous and complete proof that applies to every legal placement.

  \item For a fixed zero-diagonal SBIBD representative~$A$, we define and
    study the universal replica threshold $q^*(A,p)$. For $p=1$, we determine the exact threshold $q^*(A,1)=1$. For $p=2$, the exact threshold is classified by introducing \emph{reciprocal pairs}. For $p\ge 3$, we derive general bounds and classify the threshold using \emph{pairwise separated sets} and \emph{private target permutations}.
    Table~\ref{tab:threshold-summary} summarizes the resulting
    classification, which is exact except when $p\ge3$, $\lambda>1$, and $A$
    contains no pairwise separated set of size~$p$.

  \item We investigate when the structures used in the threshold
    classification can occur. We give a sufficient condition and a greedy
    construction for pairwise separated sets, establish their existence for
    sufficiently large feasible values of~$k$, and show how suitable affinity
    relabelings can realize reciprocal pairs and private target permutations.
\end{itemize}

We further discuss the design implications of these results. General admissible values of~$\lambda$ provide greater flexibility in the cluster size. We also quantify the supported failure tolerance, encoding traffic, and persistent parity storage, and show that selecting a suitable representative reduces the replica count by one compared to Nos and FLESC.
%
%

\medskip\noindent\textbf{Other uses of block designs in coding and storage.}
Block designs have long been used for different purposes in coding and storage.
Smith~\cite{smith1968} used incomplete block designs to construct codes with
majority logic decoding, while Kou et al.~\cite{kou2001} used incidence
structures from finite geometries to construct LDPC codes. In distributed
storage, Tian et al.~\cite{tian2013,tian2015} used block designs to organize
shorter erasure codes for exact node repair. Fractional repetition codes also
use Steiner systems and other combinatorial designs to place repeated coded packets and support uncoded exact repair~\cite{elrouayheb2010,olmez2016}.
Block designs have further been used to distribute parity groups across disk
arrays and balance the reconstruction load after a disk failure
\cite{holland1992}. More recently, Rukavina and Tonchev~\cite{rukavina2024} classified a family
of symmetric designs and studied the ternary linear codes generated by their
incidence matrices.

These studies use block designs to define parity constraints, distribute coded
packets, organize repair groups, or balance disk layouts. Our work uses
an SBIBD representative differently. It specifies the admissible relationships
between primary nodes and target nodes, while each object may independently
select its target set. 

\section{Preliminaries}\label{sec:prelim}

\subsection{Symmetric balanced incomplete block designs}\label{sec:sbibd}

\begin{definition}[Symmetric balanced incomplete block design]\label{def:sbibd}
For integers $1\le\lambda<k<v$, an $\SBIBD(v,k,\lambda)$ consists of $v$
points and $v$ blocks such that every block contains
$k$ points, every point lies in $k$ blocks, and every two distinct points lie
together in exactly $\lambda$ blocks. After independently labeling the points
and blocks by $0,1,\cdots,v-1$, its incidence matrix
$A=(A_{ij})\in\{0,1\}^{v\times v}$ is defined by $A_{ij}=1$ if and only if
point~$i$ lies in block~$j$. Such an independent labeling is called an \emph{affinity labeling} and the
resulting incidence matrix a \emph{representative} of the SBIBD. An affinity
labeling is said to have a \emph{zero diagonal} if its representative satisfies
$A_{ii}=0$ for every $i\in\{0,1,\cdots,v-1\}$.
\end{definition}

Every $\SBIBD(v,k,\lambda)$ admits a zero-diagonal representative.
A proof is given in Appendix~\ref{app:zero-diagonal}.
Unless stated otherwise, the representative matrices considered in this paper are zero-diagonal. Counting pairs consisting of an unordered point pair and a block containing
that pair gives $\lambda\binom{v}{2}=v\binom{k}{2}$, and hence
\begin{equation}\label{eq:sbibd-identity}
  \lambda(v-1)=k(k-1),\ v=\frac{k(k-1)}{\lambda}+1.
\end{equation}

The identity \eqref{eq:sbibd-identity} is necessary but not sufficient for
the existence of an $\SBIBD(v,k,\lambda)$. General existence results and
constructions are surveyed in~\cite{lindner2017,colbourn2010}.

\subsection{Stripeless erasure coding on an SBIBD}\label{sec:stripeless}

Suppose we have a storage cluster consisting of $v$ nodes indexed by $0,1,\cdots,v-1$. We utilize the block design to organize replica placement and parity formation in the coding scheme. Each node plays both roles in this scheme. It
may store primary copies of data objects and also host parities formed from
replicas received from other nodes. In the representative~$A$, row~$i$
represents node~$i$ as a primary node, while column~$j$
represents node~$j$ as a parity host. $A_{ij}=1$ means
that an object primary on node~$i$ is allowed to send a replica to node~$j$
for parity formation.

For a node~$j$, define its replication source set by
\begin{equation*}
\mathcal{S}_{\mathrm{src}}(j):=\{i:A_{ij}=1\},
\end{equation*}
and for a node~$i$, define its replication target set by
\begin{equation*}
\mathcal{S}_{\mathrm{tgt}}(i):=\{j:A_{ij}=1\}.
\end{equation*}
The first set lists the primary nodes from which node~$j$ may receive
replicas. The second lists the target nodes available to objects primary
on node~$i$. Both sets have size~$k$. Since $A$ has zero diagonal,
\[
i\notin\mathcal{S}_{\mathrm{tgt}}(i).
\]
For distinct nodes $i$ and~$j$, let
\begin{equation}\label{eq:Bij}
B_{ij}:=
\mathcal{S}_{\mathrm{tgt}}(i)
\cap
\mathcal{S}_{\mathrm{tgt}}(j).
\end{equation}
\(B_{ij}\) is the set of common admissible target nodes for objects primary on node \(i\) and objects primary on node \(j\). By the SBIBD property, 
\[|B_{ij}|=\lambda,\ B_{ij}=B_{ji},\ B_{ij}\cap\{i,j\}=\varnothing.\]

We assume all data objects are binary words of equal length, and XOR denotes componentwise
addition over $\bbF_2$. Each object \(x\) has exactly one primary copy, stored at its primary node \(i\).
For an integer $q\in\{0,1,\cdots,k\}$, the object selects a set
$T_x\subseteq\mathcal{S}_{\mathrm{tgt}}(i)$ of $q$ distinct target nodes and
sends one replica to each node in~$T_x$.  The set \(\mathcal S_{\mathrm{tgt}}(i)\) is common to all objects primary on node \(i\), whereas \(T_x\) is selected separately for each object and may differ between such objects. Thus \(\mathcal S_{\mathrm{tgt}}(i)\) contains all \(k\) admissible targets, while \(T_x\) contains the \(q\) targets actually used by \(x\). A smaller \(q\) reduces replication traffic, while a larger \(q\)  provides greater recovery redundancy.

Incoming replicas at node~$j$ are temporarily buffered according to their
primary nodes. A placement is \emph{legal} if at each node~$j$ the replicas received
at~$j$ are partitioned into $k$-tuples, each containing exactly one replica
from every primary node in $\mathcal{S}_{\mathrm{src}}(j)$. We analyze only
completed legal placements. An implementation may also buffer unmatched
replicas or store partial parities.

If one such tuple contains the objects
$x_1,x_2,\cdots,x_k$, node~$j$ stores
\[
P=x_1\XOR x_2\XOR\cdots\XOR x_k
\]
as one parity. The parity \emph{encodes} these objects, which are called its
\emph{encodees}. After the parity has been formed, the replicas used to form
it are discarded. Each object $x$ occurs in
exactly one parity at every node in its selected set~$T_x$.
Thus $q$ is also the number of parities encoding~$x$.

\begin{definition}[Node, object, and parity states]
Fix a set \(F\subseteq\{0,1,\cdots,v-1\}\) of failed nodes. Nodes in~$F$ are \emph{failed}, and all other nodes are
\emph{alive}. An object is \emph{lost} if its primary node is failed, and a
parity is \emph{lost} if the node storing it is failed. Here ``lost'' means
unavailable at its original location, not necessarily unrecoverable. An object
or parity that is not lost is called \emph{alive}.
\end{definition}

Given a set of failed nodes, the surviving primary copies and parities form a system of XOR equations. The lost objects are \emph{recoverable} if their values are uniquely determined by this surviving information.
\begin{definition}[Direct recovery]\label{def:direct-recovery}
A lost object~$x$ is \emph{directly recoverable} if some alive parity~$P$
encoding~$x$ has no other lost encodee. If
$P=x\XOR x_1\XOR x_2\XOR\cdots\XOR x_{k-1}$, then
$x=P\XOR x_1\XOR x_2\XOR\cdots\XOR x_{k-1}$, where every $x_\ell$ can be fetched
from its alive primary node.
\end{definition}

Recovery is organized into rounds. The first round may use only alive primary
copies and alive parities. Later rounds may also use values recovered in
earlier rounds. Thus direct recovery is recovery in the first round.

\section{Universal recovery guarantee}\label{sec:main}
FLESC~\cite{flesc2026} established the sufficient count
$\lambda(p-1)+2$ for multiple node failures through a counting argument.
We refine that result by treating one and two failures separately, making the
recovery rounds explicit, and formulating the guarantee uniformly over all
legal placements. 

For $p\ge1$, define
\begin{equation}\label{eq:q-piecewise}
q=
\begin{cases}
1, & p=1,\\
\lambda(p-1)+2, & p\ge2.
\end{cases}
\end{equation}

\begin{theorem}[Universal sufficient replica count]\label{thm:main}
Let $p\ge1$ be an integer, let $A$ be a fixed zero-diagonal representative of an
$\SBIBD(v,k,\lambda)$, and let $q$ be given by~\eqref{eq:q-piecewise}. If
$q\le k$, then every legal placement using exactly $q$ distinct target nodes
per object permits recovery after the failure of any set of at most $p$ nodes.
If at most two nodes fail, every lost object is directly recoverable. In
general, all directly recoverable lost objects can be restored in the first
round and every remaining lost object in the second round.
\end{theorem}

\begin{proof}
Let $F$ be a failure set with $|F|\le p$. The case $F=\varnothing$ is
trivial.

Suppose $F=\{i\}$. An object $x$ whose primary node is alive can be read
directly. If the primary node of $x$ is $i$, then
$T_x\subseteq\mathcal S_{\mathrm{tgt}}(i)$ and
$i\notin\mathcal S_{\mathrm{tgt}}(i)$ because $A_{ii}=0$. Hence every parity encoding~$x$ is alive and has no other lost encodee. Thus $x$ is directly recoverable.

Now let $F=\{i,j\}$, and let $x$ be primary on node~$i$. Since $i\notin T_x$, at most one parity encoding~$x$ is lost, namely the one hosted
on node~$j$ if $j\in T_x$. Any parity encoding~$x$ that contains another lost
encodee must be hosted in~$B_{ij}$. Hence at most
$1+|B_{ij}|=\lambda+1$ parities are unusable for direct recovery.
Since $p\ge2$ implies
$q=\lambda(p-1)+2\ge\lambda+2$, at least one parity of~$x$ is alive and
contains no other lost encodee. Thus $x$ is directly recoverable.

It remains to consider $|F|\ge3$. If $ |F|<p$, enlarge $F$ to a set of size~$p$. Recovery under the enlarged failure set also gives recovery under the original failure set. We may therefore assume that $|F|=p$. Without loss of generality, assume that node~$0$ is failed, and let $x$ be an object primary on node~$0$.

Object~$x$ has $q = \lambda(p-1)+2$ parities, one on each of the
$q$ distinct selected target nodes $T_x$. Let
\[
L_x:=F\cap T_x,\ \ell_x:=|L_x|.
\]
Thus $L_x$ is the set of failed hosts of parities encoding~$x$, and
$0\le\ell_x\le p-1$ because $A_{00}=0$. The remaining $q-\ell_x$ parities encoding $x$ are alive.

If some alive parity~$P_0$ of~$x$ has only directly recoverable lost encodees
other than~$x$, recover those encodees in the first round and then recover~$x$. Suppose, for a contradiction, that no such $P_0$ exists. From every alive
parity of~$x$, choose a lost encodee $y\ne x$ that is not directly
recoverable, and let $Y$ be the resulting multiset. The same lost $y$ may occur more than once. Thus $|Y|=q-\ell_x$, where cardinality is counted with multiplicity. Since a legal parity contains exactly one encodee from each source node and it already contains \(x\), every selected object \(y\in Y\) has a primary node different from node \(0\).

We use the following two facts implied by~\eqref{eq:Bij}. First, if an occurrence $y\in Y$ has primary node~$i$, then for each other failed node $j\ne i$, at most $\lambda$ parities of~$y$ also encode an object from~$j$. Second, for each
failed node $i\ne0$, at most $\lambda$ occurrences in~$Y$ have primary
node~$i$, because the parities of~$x$ from which these occurrences were selected are hosted at distinct nodes in~$B_{0i}$.

Consider the multiset of all pairs consisting of an occurrence in~$Y$ and a
parity that encodes its object:
\[
  \mathcal{Q} = \{(y,P) \mid y \in Y,\,
    P \text{ is a parity encoding } y\},
\]
where the multiplicity of each $y$ is inherited from~$Y$.
Since the object in every occurrence is encoded by exactly $q$ parities,
each occurrence contributes exactly $q$ pairs, and therefore
\[|\mathcal{Q}|=q(q-\ell_x).\]
For $s\in\{1,2,3\}$, let $\mathcal{Q}_s$ denote the pairs covered by
Case~$s$ below. Because every occurrence in~$Y$ is not directly recoverable,
each pair $(y,P)$ lies in at least one of these three submultisets.

\medskip\noindent\textbf{Case 1: $P$ is on failed node~$0$.}
Each occurrence contributes at most one such pair, so
$|\mathcal{Q}_1|\le q-\ell_x$.

\medskip\noindent\textbf{Case 2: $P$ encodes more than one lost encodee.}
For an occurrence on node~$i$, each of the other $p-1$ failed nodes can occur
with it in at most $\lambda$ parities. Hence
\[|\mathcal{Q}_2| \le \lambda(p-1)(q-\ell_x).\]

\medskip\noindent\textbf{Case 3: $P$ is lost and encodes exactly one lost object
(namely $y$).}
$P$ is hosted by a failed node $j\ne0$. Because every parity hosted at~$j$
contains one object from each source in $\mathcal S_{\mathrm{src}}(j)$,
the premise of this case implies that exactly one failed source, say~$i$,
sends to~$j$. Moreover, $j\notin L_x$, since every node in~$L_x$ receives
from failed node~$0$. There are $p-\ell_x-1$ possible storage nodes~$j$.
For each one, the second consequence above bounds the number of relevant
occurrences from its unique failed source by~$\lambda$. Therefore
\[|\mathcal{Q}_3| \le \lambda(p-\ell_x-1).\]

The number of pairs outside the union of the three cases is at least
\begin{align*}
N_{\mathrm{out}}
&:=q(q-\ell_x)-[1+\lambda(p-1)](q-\ell_x)
-\lambda(p-\ell_x-1)\notag\\
&=(q-\ell_x)[q-1-\lambda(p-1)]
-\lambda(p-\ell_x-1).
\end{align*}
Substitution of $q=\lambda(p-1)+2$ gives
\begin{equation*}
N_{\mathrm{out}}
=(q-\ell_x)-\lambda(p-\ell_x-1)
=(\lambda-1)\ell_x+2>0.
\end{equation*}
Thus some pair $(y^*,P^*)$ belongs to none of the three cases. Its parity is
alive and has $y^*$ as its only lost encodee, so $y^*$ is directly
recoverable, contrary to the construction of~$Y$. The desired parity~$P_0$
therefore exists. Recover its other lost encodees in the first round and~$x$
in the second. Since node~$0$ and object~$x$ were arbitrary, the conclusion
holds for every failed node and every lost object.
\end{proof}

\section{Refining the sufficient replica count}
\label{sec:lower-bounds}

Theorem~\ref{thm:main} gives a sufficient replica count for recovery,
uniformly over all legal placements and all failure sets of at most~$p$
nodes. However, it does not determine whether fewer replicas suffice for a
fixed zero-diagonal representative~$A$. We therefore introduce the universal
replica threshold~$q^*(A,p)$ and investigate how far the sufficient count can
be reduced. We first derive unconditional bounds and then identify incidence
structures that yield exact thresholds or sharper bounds.

\subsection{Threshold definition and unconditional bounds}

Assume $k\ge1$ for $p=1$ and
$k\ge\lambda(p-1)+2$ for $p\ge2$.

\begin{definition}[Universal replica threshold]\label{def:qstar}
For a fixed zero-diagonal representative~$A$ of an
$\SBIBD(v,k,\lambda)$ and an integer $p\ge1$,
let $q^*(A,p)$ be the smallest integer $q\in\{0,1,\cdots,k\}$ such that every
legal placement of any finite object collection using exactly $q$ target
nodes per object permits recovery from every set of at most $p$ failed nodes.
\end{definition}

By Theorem~\ref{thm:main}, the threshold $q^*(A,p)$ is well defined.

\begin{proposition}\label{prop:unconditional-lb}
For $p=1$, 
\begin{equation*}
q^*(A,1)=1.
\end{equation*}
For every $p\ge2$,
\begin{equation}\label{eq:qstar-general}
\max\{p,\lambda+1\}
\le q^*(A,p)
\le \lambda(p-1)+2.
\end{equation}
In particular,
\begin{equation}\label{eq:qstar-p2-interval}
\lambda+1\le q^*(A,2)\le\lambda+2.
\end{equation}
\end{proposition}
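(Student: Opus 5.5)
Upper bounds come directly from Theorem~\ref{thm:main}: for $p=1$ it gives $q^*(A,1)\le1$, and for $p\ge2$ it gives $q^*(A,p)\le\lambda(p-1)+2$, since we assume $k\ge\lambda(p-1)+2$. So the work is in the lower bounds. For each one I will exhibit a finite object collection, a legal placement with the given $q$, and a failure set of size at most $p$ such that the surviving XOR equations do not determine the lost objects. Non-uniqueness will always be shown the same way. I will find a nonzero assignment $\delta$ to the lost objects, supported on lost objects only, such that every alive parity contains an even number of support elements (counted over its encodees). Adding $\delta$ to the lost objects then leaves all alive primaries and alive parities unchanged.

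For $p=1$ the lower bound is $q=0$. With no replicas there are no parities, so a single failed node that holds at least one object leaves it undetermined. Hence $q^*(A,1)=1$.

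For $p\ge2$ and the bound $q^*\ge p$, take $q=p-1$ and build a placement in which one object $x$, primary on node $0$, has all its parities on failed nodes. I choose $T_x\subseteq\mathcal S_{\mathrm{tgt}}(0)$ of size $p-1$ and set $F=\{0\}\cup T_x$, which has size $p$. Every parity of $x$ is then lost, so flipping $x$ alone, with all other data fixed, is undetected.

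For $q^*\ge\lambda+1$, take $q=\lambda$ and $F=\{0,i\}$ for some $i\neq0$. Let $x$ be primary on $0$ and $y$ primary on $i$, both with $T_x=T_y=B_{0i}$, which has size $\lambda\le k$. At each node $j\in B_{0i}$, put $x$ and $y$ into the same parity. Flipping both $x$ and $y$ changes each of these parities by an even amount, and no other parity is affected.

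The main obstacle is making these placements legal. Every parity at a node $j$ needs exactly one replica from each source in $\mathcal S_{\mathrm{src}}(j)$, and every object must use exactly $q$ targets. My plan is to pad with filler objects, which are alive since they are primary on alive nodes and unflipped, so they do not disturb the even-count argument. Concretely, I consider the bipartite multigraph of demand: each needed parity at $j$ requires one object from each source node, and each object requires $q$ target slots. I then balance it by repetition. Take $N$ copies of a suitable regular pattern: for every node $i$, every $q$-subset of $\mathcal S_{\mathrm{tgt}}(i)$ is used by the same number of filler objects. By symmetry each node $j$ then receives the same number $M$ of replicas from each source in $\mathcal S_{\mathrm{src}}(j)$, counting the special objects. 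To absorb the special objects, I add fillers from the other sources with the right targets. For example, if $x$ uses target $j$, I add one filler from each other source of $j$ with a $q$-set containing $j$, then complete the counts at the other nodes the same way, iterating and finally replicating the pattern so that every node sees equal counts from all its sources. The cleanest version: choose multiplicities so that for every node $j$ and every source $s\in\mathcal S_{\mathrm{src}}(j)$, the number of objects primary on $s$ targeting $j$ is a common value $m_j$. Then partition the replicas into $k$-tuples, placing $x,y$ together at $B_{0i}$ in the second construction. Any such balanced pattern works because tuples can be formed arbitrarily once the counts are equal. Each filler is primary on an alive node or, if primary on a failed node, is left unflipped, so it is harmless.
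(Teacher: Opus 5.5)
Your proposal is correct and follows the paper's proof essentially step for step: $q=0$ for $p=1$, failing the primary node together with all $p-1$ targets for $q^\ast\ge p$, and co-locating $x$ and $y$ in the parities at $B_{0i}$ and flipping both for $q^\ast\ge\lambda+1$, with the upper bounds taken from Theorem~\ref{thm:main}. The legality step you flag, which the paper passes over with ``complete the remaining positions,'' needs no iteration: let every node use each $q$-subset of its target set exactly once, and take the member with target set $T_x$ on node~$0$ (and $B_{0i}$ on node~$i$) as the special object; then every node receives exactly $\binom{k-1}{q-1}$ replicas from each of its sources, and the tuples, including the required $x$--$y$ pairings, can be formed freely.
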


\begin{proof}
For $p=1$, failing the primary node of an object shows that $q=0$ does not
suffice, while Theorem~\ref{thm:main} shows that $q=1$ does. Hence
$q^*(A,1)=1$.

For $p\ge2$, first take $q=p-1$, select the $q$ target nodes of an object
$x$, and fail those nodes together with the primary node of~$x$. The primary
copy of~$x$ and every parity encoding~$x$ are lost. Hence $x$ cannot be
recovered, so $q^*(A,p)\ge p$.

Next choose two distinct primary nodes $i$ and $j$. Set $q=\lambda$ and
choose objects $x_i$ and $x_j$ with
$T_{x_i}=T_{x_j}=B_{ij}$, which is possible because $|B_{ij}|=\lambda$.
Fail only nodes $i$ and $j$, an admissible failure pattern for every
$p\ge2$. Since $A$ has zero diagonal,
$B_{ij}\cap\{i,j\}=\varnothing$, so all selected parities remain alive. At
every target in $B_{ij}$, place $x_i$ and $x_j$ in the same parity and
complete the remaining positions with objects whose primary nodes are alive.

These are the only parities containing $x_i$ or $x_j$: each object selects
exactly $q=\lambda$ targets, and each of its replicas is assigned to exactly
one parity. Every such parity contains both lost objects, so neither object is
uniquely recoverable. Thus
$q=\lambda$ does not guarantee recovery and
$q^*(A,p)\ge\lambda+1$. Combining the two constructions gives the lower
bound in~\eqref{eq:qstar-general}. Theorem~\ref{thm:main} gives the upper
bound. Finally, setting $p=2$
in~\eqref{eq:qstar-general} and using $\lambda\ge1$
gives~\eqref{eq:qstar-p2-interval}.
\end{proof}

\subsection{Exact classification for $p=2$}

The following condition distinguishes the two possible thresholds for a fixed
representative. Section~\ref{sec:reciprocal-realization} examines its behavior
under different zero-diagonal affinity labelings.

\begin{definition}[Reciprocal pair]\label{def:reciprocal-pair}
Two distinct nodes $i$ and $j$ form a \emph{reciprocal pair} if
\begin{equation*}
j\in\mathcal{S}_{\mathrm{tgt}}(i)
\quad\text{and}\quad
i\in\mathcal{S}_{\mathrm{tgt}}(j).
\end{equation*}
\end{definition}

\begin{proposition}[{Exact classification for $p=2$}]\label{prop:p2-exact}
Let $A$ be a fixed zero-diagonal representative of an
$\SBIBD(v,k,\lambda)$ and assume
$k\ge\lambda+2$. Then
\begin{equation*}
q^*(A,2)=
\begin{cases}
\lambda+2,& \text{if $A$ contains a reciprocal pair},\\
\lambda+1,& \text{otherwise}.
\end{cases}
\end{equation*}
\end{proposition}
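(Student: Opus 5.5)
By Proposition~\ref{prop:unconditional-lb} we already know $\lambda+1\le q^*(A,2)\le\lambda+2$. The proof therefore reduces to two statements. If $A$ has no reciprocal pair, then $q=\lambda+1$ guarantees recovery. If $A$ has a reciprocal pair, then $q=\lambda+1$ admits a legal placement and a failure set of size two from which recovery is impossible. I would prove each direction separately, reusing the counting from the two-failure case of Theorem~\ref{thm:main}.

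\emph{No reciprocal pair implies $q^*(A,2)=\lambda+1$.} Take $q=\lambda+1$. Empty and single failures are handled as in Theorem~\ref{thm:main}: since $A_{ii}=0$, every parity of an object primary on a failed node $i$ is alive and has no other lost encodee. For $F=\{i,j\}$ and $x$ primary on $i$, I repeat the two-failure argument but sharpen the count of unusable parities. A parity of $x$ is lost only if it is hosted on $j$, which requires $j\in T_x\subseteq\mathcal S_{\mathrm{tgt}}(i)$. A parity of $x$ has a second lost encodee only if its host receives from both $i$ and $j$, i.e.\ lies in $B_{ij}$. Since $i,j$ is not a reciprocal pair, one of $j\notin\mathcal S_{\mathrm{tgt}}(i)$ or $i\notin\mathcal S_{\mathrm{tgt}}(j)$ holds.
\begin{itemize}
\item If $j\notin\mathcal S_{\mathrm{tgt}}(i)$, no parity of $x$ is lost, so at most $|B_{ij}|=\lambda<\lambda+1$ parities of $x$ are unusable. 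Hence $x$ is directly recoverable.
\item If instead $j\in\mathcal S_{\mathrm{tgt}}(i)$ but $i\notin\mathcal S_{\mathrm{tgt}}(j)$, then objects primary on $i$ are covered by the previous bullet with the roles of $i$ and $j$ exchanged. For $x$ primary on $i$ itself, I simply apply that bullet to the objects of the node whose target set avoids the other failed node.
\end{itemize}
To make the second bullet symmetric cleanly, I would first write the bound as ``at most $\lambda+[j\in\mathcal S_{\mathrm{tgt}}(i)]$ unusable parities for objects on $i$''. The case that matters is when this indicator is $1$ for objects on $i$; the non-reciprocity is then used for the objects on $j$. I would check carefully that this indicator bound suffices for both nodes; this is the one point where the argument must not be sloppy.

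\emph{Reciprocal pair implies $q^*(A,2)\ge\lambda+2$.} Let $i,j$ be reciprocal, so $j\in\mathcal S_{\mathrm{tgt}}(i)$ and $i\in\mathcal S_{\mathrm{tgt}}(j)$. Since $B_{ij}\cap\{i,j\}=\varnothing$, the sets $T_{x_i}:=B_{ij}\cup\{j\}$ and $T_{x_j}:=B_{ij}\cup\{i\}$ both have exactly $\lambda+1$ elements and are admissible. At each node of $B_{ij}$, I place $x_i$ and $x_j$ in a common parity. At node $j$, the parity containing $x_i$ is filled from the remaining sources of $j$; these are alive, because $j\notin\mathcal S_{\mathrm{src}}(j)$ and $i$ is already used by $x_i$. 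Symmetrically at node $i$. All other positions are completed with objects whose primary nodes are alive, exactly as in the proof of Proposition~\ref{prop:unconditional-lb}. After failing $\{i,j\}$, the parities on $i$ and $j$ are lost, and every surviving parity involving $x_i$ or $x_j$ contains both. Consequently, replacing $(x_i,x_j)$ by $(x_i\XOR z,\,x_j\XOR z)$ for any nonzero word $z$ preserves all surviving data, and recovery is not unique.

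The main obstacle is the completion of this counterexample to a genuinely legal placement. Every filler object must itself have exactly $\lambda+1$ targets, and every node must partition its received replicas into full $k$-tuples. I would rely on the same completion convention used in Proposition~\ref{prop:unconditional-lb}. If pressed to justify it, I would add enough copies of a balanced placement of filler objects primary on nodes other than $i$ and $j$ so that all tuples close up. Care is needed that no filler is primary on $i$ or $j$, since a lost filler would add new unknowns.
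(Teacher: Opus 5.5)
Your reciprocal-pair construction matches the paper's. The perturbation $(x_i,x_j)\mapsto(x_i\XOR z,x_j\XOR z)$ is a valid way to show non-uniqueness.

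The gap is in the direction without a reciprocal pair. You try to show that every lost object is \emph{directly} recoverable at $q=\lambda+1$, and you postpone checking that the bound of $\lambda+[j\in\mathcal S_{\mathrm{tgt}}(i)]$ unusable parities ``suffices for both nodes''. It does not. Non-reciprocity only lets you label the failed nodes so that $j\notin\mathcal S_{\mathrm{tgt}}(i)$; it remains possible that $i\in\mathcal S_{\mathrm{tgt}}(j)$. An object $y$ on $j$ may then have $T_y=B_{ij}\cup\{i\}$, with each of its parities hosted in $B_{ij}$ also containing an object on $i$. All $\lambda+1$ of its parities are then blocked, so $y$ is not directly recoverable. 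Concretely, the cyclic $\SBIBD(7,3,1)$ with $A_{ab}=1$ iff $b-a \bmod 7\in\{1,2,4\}$ has no reciprocal pair. Yet for $F=\{0,1\}$, an object on node~$0$ with targets $\{1,2\}$ that shares its parity at node~$2$ with an object on node~$1$ has no usable parity. Your second bullet does not repair this. Exchanging the roles of the two nodes handles the objects of the \emph{other} node, and your sentence about $x$ primary on $i$ recovers only those other objects, never $x$ itself.

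The missing step is a second recovery round, which is exactly the paper's argument. With $j\notin\mathcal S_{\mathrm{tgt}}(i)$, your first bullet recovers every object on $i$ directly. Afterwards, each object $y$ on $j$ has at most one lost parity (at $i$, if $i\in T_y$), hence at least $q-1=\lambda\ge1$ alive parities. Every other encodee of such a parity is either primary on an alive node or primary on $i$ and already recovered. So $y$ is recovered in the second round, and no blocked-parity count is needed for $j$ at all.

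On the completion: the balanced filler placement you sketch cannot avoid additional objects primary on $i$ and $j$. Any filler replica sent to a node of $\mathcal S_{\mathrm{tgt}}(i)$ must share its parity with an object primary on $i$. This is harmless. Perturbing only $(x_i,x_j)$, with all other objects fixed, already gives two complete object collections with identical surviving data, however many other lost objects the completion introduces.
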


\begin{proof}
First suppose that $i,j$ form a reciprocal pair, and let $B_{ij}$ be their
$\lambda$ common targets. Since the diagonal of~$A$ is zero, neither $i$
nor $j$ belongs to $B_{ij}$. Set $q=\lambda+1$. Choose an object $x_i$ on
node~$i$ with targets $B_{ij}\cup\{j\}$, and an object $x_j$ on node~$j$
with targets $B_{ij}\cup\{i\}$. Fail nodes $i$ and~$j$. The two parities
hosted on the failed nodes are lost. At every node in $B_{ij}$, arrange for
$x_i$ and $x_j$ to enter the same parity. After subtracting alive encodees,
all surviving equations have coefficient row $(1,1)$, so their rank is one.
Thus $q=\lambda+1$ does not guarantee recovery and
$q^*(A,2)=\lambda+2$ by~\eqref{eq:qstar-p2-interval}.

Now suppose that \(A\) has no reciprocal pair, and let \(i\) and \(j\) be the failed nodes. Without loss of generality, \(j\notin\mathcal{S}_{\mathrm{tgt}}(i)\). Hence all \(q=\lambda+1\) parities encoding an object \(x\) primary on \(i\) are alive. Only target nodes in \(B_{ij}\) can host such a parity containing another lost encodee primary on \(j\). Since \(q=\lambda+1>|B_{ij}|=\lambda\), at least one parity contains no other lost encodee, and \(x\) is directly recoverable. After recovering all objects primary on \(i\), each object primary on \(j\) has at least \(q-1=\lambda\) alive parities, and all other encodees in any such parity are known. Thus all objects primary on \(j\) are also recoverable. Therefore \(q=\lambda+1\) suffices, and combining with the lower bound in \eqref{eq:qstar-p2-interval} gives \(q^*(A,2)=\lambda+1\).
\end{proof}

Appendix~\ref{app:p2-example} gives an explicit zero-diagonal representative
of an $\SBIBD(7,4,2)$ exhibiting this obstruction from a reciprocal pair.

\subsection{ Classification for $p\ge 3$}

For $p\ge3$, the threshold depends on how the failed rows overlap at target
nodes. The following condition isolates the case with low overlap, and
Section~\ref{sec:separated-realization} studies when it exists and how it can
be constructed.

\begin{definition}[Pairwise separated set]
\label{def:pairwise-separated}
For $F\subseteq\{0,1,\cdots,v-1\}$ and
$t\in\{0,1,\cdots,v-1\}$, let
\begin{equation*}
d_F(t):=\bigl|\{i\in F:A_{it}=1\}\bigr|
=\sum_{i\in F}A_{it}.
\end{equation*}
The set $F$ is \emph{pairwise separated} if
\begin{equation*}
d_F(t)\le2 \text{ for every }t\in\{0,1,\cdots,v-1\}.
\end{equation*}
\end{definition}
The value $d_F(t)$ is the number of rows indexed by $F$ that have entry~$1$
in column~$t$. $F$ is pairwise separated if no column has
entry~$1$ in three distinct rows indexed by~$F$. Consequently, the sets
$B_{ij}$ corresponding to the unordered pairs
$\{i,j\}\subseteq F$ are mutually disjoint. For $i\in F$, define
\begin{equation*}
U_i(F):=
\bigcup_{j\in F\setminus\{i\}}B_{ij}.
\end{equation*}
If $F$ is pairwise separated and $|F|=p$, then
\[
|U_i(F)|=\lambda(p-1).
\]
When $\lambda=1$ and the design is a projective plane, this is the classical
arc condition that no three points are collinear \cite{kaplan2017}.

\begin{proposition}[Pairwise separated lower bound]
\label{prop:separated-lb}
Let $p\ge3$. If $A$ contains a pairwise separated set $F$ of size $p$ and
$k\ge\lambda(p-1)+2$, then $q=\lambda(p-1)$ does not guarantee recovery.
Consequently,
\begin{equation}\label{eq:separated-lb}
\lambda(p-1)+1
\le q^*(A,p)
\le\lambda(p-1)+2.
\end{equation}
\end{proposition}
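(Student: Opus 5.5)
The plan is an explicit counterexample. With $q=\lambda(p-1)$, I would fail exactly the pairwise separated set $F$ of size $p$. I would give each $i\in F$ one special lost object $x_i$ and arrange the parities so that the surviving XOR system cannot separate the $x_i$. The upper bound in \eqref{eq:separated-lb} comes from Theorem~\ref{thm:main}. For the lower bound, failure of $q=\lambda(p-1)$ already gives $q^*(A,p)\ge\lambda(p-1)+1$ if monotonicity in $q$ holds. If it does not, I would run the same construction for every $q\le\lambda(p-1)$, choosing $T_{x_i}\subseteq U_i(F)$. I would check that reduction but do not expect it to be the hard part.

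\emph{Special objects.} For $i\in F$ choose $x_i$ primary on $i$ with $T_{x_i}=U_i(F)$. This is legal because $U_i(F)\subseteq\mathcal S_{\mathrm{tgt}}(i)$ and $|U_i(F)|=\lambda(p-1)=q$. Take a target $t\in U_i(F)$. Then $t\in B_{ij}$ for a unique $j\in F\setminus\{i\}$, since the sets $B_{ij}$ are disjoint. Also $d_F(t)=2$, so $i$ and $j$ are the only failed sources of $t$. Moreover $t\in U_j(F)$, so $x_j$ also sends a replica to $t$. At every such $t$ I require that $x_i$ and $x_j$ share one parity tuple. All other parities, at any node, must contain no special object.

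\emph{Non-uniqueness.} Let $e$ be any nonzero word, and replace every $x_i$ ($i\in F$) by $x_i\XOR e$, keeping all other objects fixed. I claim every surviving equation is unchanged. Any parity, alive or lost, contains either no special object or exactly the two objects $x_i,x_j$, and $e\XOR e=0$. Alive primaries belong to nodes outside $F$ and are untouched. Hence two different assignments agree with all surviving information, and the $x_i$ are not uniquely determined. This holds whatever the other objects are, including extra lost objects on failed nodes, since the perturbation is zero on them.

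\emph{Completing a legal placement (main obstacle).} I must add filler objects so that at each node $t$ the received replicas split into $k$-tuples with one replica per source in $\mathcal S_{\mathrm{src}}(t)$. It suffices that every source of $t$ sends $t$ the same number $M$ of replicas, and then to pair $x_i,x_j$ inside one tuple at each $t\in B_{ij}$. With $M=q$:
\begin{itemize}
\item A node $s\notin F$ must realise the multiset $q\cdot\mathbf 1_{\mathcal S_{\mathrm{tgt}}(s)}$ as a union of $k$ sets of size $q$. Cyclic windows of length $q$ in $\mathcal S_{\mathrm{tgt}}(s)$ do this.
\item A node $i\in F$ must realise the multiset $q\cdot\mathbf 1_{\mathcal S_{\mathrm{tgt}}(i)}-\mathbf 1_{U_i(F)}$, of total size $q(k-1)$, as $k-1$ sets of size $q$ with distinct elements.
\end{itemize}
For the second case I would use the round-robin lemma: a multiset of total size $qb$ in which every multiplicity is at most $b$ splits into $b$ sets of size $q$. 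List the elements grouped by value and deal them cyclically into $b$ piles. Here $b=k-1$ and the multiplicities are at most $q\le k-2<b$, so the lemma applies. The lemma is the step to state carefully.

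Every node $t$ then receives exactly $q$ replicas from each source. Hence the replicas at $t$ split into $q$ complete tuples, and we may place $x_i,x_j$ together in one tuple. The result is a legal placement of a finite object collection in which recovery from the failure of $F$, with $|F|=p$, is impossible.
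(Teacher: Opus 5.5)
Your proposal is correct and follows the paper's proof: the same failure set $F$, the same targets $T_{x_i}=U_i(F)$, the same pairing of $x_i$ with $x_j$ at each $t\in B_{ij}$, the same shift of all $x_i$ by one nonzero word, and the upper bound from Theorem~\ref{thm:main}. Your round-robin completion to a legal placement spells out a step the paper only asserts, and the paper likewise goes from failure at $q=\lambda(p-1)$ to $q^*(A,p)\ge\lambda(p-1)+1$ without addressing monotonicity in $q$; note, though, that your fallback with one special object per failed node can fail when $p$ and $q$ are both odd (if the hosts in $U_i(F)$ are alive, the pairing forces a $q$-regular multigraph on the $p$ nodes of $F$), so that remark would need a variant such as two special objects per node.
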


\begin{proof}
Set $q=\lambda(p-1)$ and take the $p$ nodes in~$F$ as the failure set.
We construct a legal placement in which the selected objects are not uniquely
recoverable, showing that this value of~$q$ does not guarantee recovery.

For each $i\in F$, select an object $x_i$ primary on node~$i$ with
$T_{x_i}=U_i(F)$. Then 
\[|T_{x_i}| = \lambda(p-1) = q.\]
For every unordered pair $\{i,j\}\subseteq F$ and every
$t\in B_{ij}$, arrange the replicas of $x_i$ and $x_j$ to enter the same
parity hosted by node~$t$.

Consider any surviving parity $P$ encoding~$x_i$, and let $h$ denote its host.
Since $T_{x_i}=U_i(F)$, we have $h\in U_i(F)\setminus F$, so $h$ belongs to a
unique $B_{ij}$ with $j\in F\setminus\{i\}$. Membership in $B_{ij}$ gives
$d_F(h)\ge2$, while pairwise separation gives $d_F(h)\le2$, hence
$d_F(h)=2$. By construction, $x_i$ and $x_j$ are therefore the only lost
encodees in this parity. All other encodees are alive and can be cancelled,
leaving
\[
x_i\XOR x_j=\gamma_h,
\]
where $\gamma_h$ is known.

Let $w$ be any nonzero binary object of the same length as the selected
objects, and define
\[
x_i'=x_i\XOR w \text{ for every }i\in F.
\]
Every surviving parity contains either no selected object, or exactly two
selected objects due to $d_F(h) = 2$. For every surviving parity containing two selected objects,
\[
x_i'\XOR x_j'
=
(x_i\XOR w)\XOR(x_j\XOR w)
=
x_i\XOR x_j.
\]
Thus replacing $(x_i)_{i\in F}$ by $(x_i')_{i\in F}$ leaves every surviving
parity unchanged. Since $w\ne0$, these are two distinct object
collections that produce exactly the same surviving data. Hence the selected
objects are not uniquely recoverable when $q=\lambda(p-1)$. Therefore,
\[
q^*(A,p)\ge\lambda(p-1)+1.
\]
The upper bound follows from
Proposition~\ref{prop:unconditional-lb}, completing the proof of
\eqref{eq:separated-lb}.
\end{proof}

The preceding proposition identifies an obstruction supported by a
pairwise separated set of size~$p$. We now consider the complementary case in which
$A$ contains no such set.  Then every set of $p$ failed nodes has a
target shared by at least three of its nodes. Such higher overlap reduces the number of distinct targets blocked by other lost encodees and leads to a stronger recovery bound. Next we show that the resulting higher intersections refine the sufficient replica count by one.

\begin{proposition}[Recovery without pairwise separation]
\label{prop:no-separated-ub}
Let $p\ge3$ and $k\ge\lambda(p-1)+1$. If the fixed zero-diagonal
representative~$A$ contains no pairwise separated set of size~$p$, then
\begin{equation}\label{eq:no-separated-interval}
\max\{p,\lambda+1\}
\le q^*(A,p)
\le\lambda(p-1)+1.
\end{equation}
In particular, if $\lambda=1$, then
\begin{equation*}
q^*(A,p)=p.
\end{equation*}
\end{proposition}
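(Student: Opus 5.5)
The lower bound in \eqref{eq:no-separated-interval} needs no new work: it is the general lower bound $\max\{p,\lambda+1\}\le q^*(A,p)$ of Proposition~\ref{prop:unconditional-lb}, which holds for every zero-diagonal representative. The $\lambda=1$ statement is then immediate, since the interval becomes $\max\{p,2\}\le q^*(A,p)\le p$, and $p\ge3$ forces $q^*(A,p)=p$. The substance is therefore the upper bound. I plan to show that $q=\lambda(p-1)+1$ replicas already guarantee recovery by rerunning the proof of Theorem~\ref{thm:main} with this smaller $q$. The hypothesis that no $p$-subset is pairwise separated should supply the one unit of slack that the counting loses.

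\textbf{Small and undersized failure sets.} If $|F|\le2$, the direct-recovery argument of Theorem~\ref{thm:main} only needs $q\ge\lambda+2$. Here $q=\lambda(p-1)+1\ge2\lambda+1\ge\lambda+2$ because $p\ge3$. If $3\le|F|<p$, I enlarge $F$ to a set of size exactly $p$, as in the theorem; recovery under the enlarged set implies recovery under $F$. So assume $|F|=p$. By hypothesis, $F$ is not pairwise separated, so there is a column $t_0$ with $d_F(t_0)\ge3$. For any $i\in F$ this gives
\[
|U_i(F)|\le\sum_{j\in F\setminus\{i\}}|B_{ij}|=\lambda(p-1),
\]
with strict inequality whenever $A_{it_0}=1$: two of the sets $B_{ij}$ then share $t_0$.

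\textbf{Counting argument.} Fix a lost $x$ on node~$0$ and suppose, for contradiction, that no alive parity $P_0$ of $x$ has only directly recoverable lost co-encodees. Build the multiset $Y$ with $|Y|=q-\ell_x$ and the pair multiset $\mathcal Q$ with $|\mathcal Q|=q(q-\ell_x)$, exactly as before. With the new $q$, the crude bounds give
\[
N_{\mathrm{out}}\ \ge\ -\lambda(p-\ell_x-1),
\]
so two savings must be found.

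First, I replace the Case~2 bound $\lambda(p-1)$ per occurrence by $|U_i(F)|$ for an occurrence primary on node~$i$. Second, I remove the double counting between Cases~1 and~2. A parity of $y$ hosted on node~$0$ and also containing another lost object is counted in both cases; this happens whenever $0\in\mathcal S_{\mathrm{tgt}}(i)$ and $0\in B_{ij}$ for some $j$.

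In parallel, I will sharpen Case~3. A failed host $j\notin L_x$ with a unique failed source $i$ satisfies $d_F(j)=1$. The number of such hosts can be bounded through the column sums
\[
\sum_t d_F(t)=pk,\qquad \sum_t\binom{d_F(t)}{2}=\lambda\binom p2.
\]
A column with $d_F(t_0)\ge3$ forces extra columns with $d_F=0$ or $1$ elsewhere, and I intend to trade this against the Case~3 term.

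\textbf{Main obstacle.} Turning the existence of one heavy column $t_0$ into a deficit of at least $\lambda(p-\ell_x-1)+1$ in the total count is the difficult step. The saving from $t_0$ only benefits occurrences whose primary row hits $t_0$, while the adversary building $Y$ may avoid such rows. If the plain count cannot close this gap, I will instead argue about the surviving XOR system directly. Consider the graph on lost objects whose edges are the alive parities with exactly two lost encodees. Non-recoverability then requires every alive parity of every lost object in a component to carry another lost encodee. I will try to show that this forces the failed rows to realize all the disjoint $B_{ij}$ pattern used in Proposition~\ref{prop:separated-lb}. That pattern would make $F$ pairwise separated, contradicting the hypothesis.
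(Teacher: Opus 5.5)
Your lower bound, the $\lambda=1$ deduction, the treatment of $|F|\le2$, and the enlargement step are fine. Proposition~\ref{prop:unconditional-lb} sits under the standing assumption $k\ge\lambda(p-1)+2$, but its constructions use only $q=p-1$ and $q=\lambda$, so they stay valid for $k\ge\lambda(p-1)+1$.

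The gap is the upper bound, which is the whole content of the statement and which you leave as a plan. At $q=\lambda(p-1)+1$ the per-object count of Theorem~\ref{thm:main} has no slack. The savings you list do not supply the missing $\lambda(p-\ell_x-1)+1$:
\begin{itemize}
\item The hypothesis guarantees only one column $t_0$ with $d_F(t_0)\ge3$. So $|U_i(F)|<\lambda(p-1)$ may hold for just three rows, and the occurrences in $Y$ can avoid them.
\item The overlap between Cases~1 and~2 exists only when $d_F(0)\ge2$ and $0\in T_y$.
\item The column-sum identities constrain all $v$ columns, not the $p$ failed hosts that feed Case~3.
\end{itemize}
More basically, an argument centred on one object $x$ sees only the rows that occur in $Y$. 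When $\ell_x>\lambda$, these need not even exhaust $F\setminus\{0\}$, so a hypothesis about all of $F$ gives no leverage. You are also trying to re-establish two-round recovery at the smaller $q$. That is stronger than what is required, and it is not what the paper proves.

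Your fallback has the same blind spot. An ambiguity is supported on lost objects whose primary nodes form some $H\subseteq F$. Showing that the rows of $H$ realize the disjoint $B_{ij}$ pattern only makes $H$ pairwise separated. That contradicts the hypothesis only when $H=F$. Restricting to parities with exactly two lost encodees also discards the heavy columns, which is exactly where the hypothesis enters.

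The missing idea is a per-node count summed over the whole current set, combined with peeling failed nodes one at a time. For the unprocessed set $H$ with $|H|=r$:
\begin{itemize}
\item Let $a_i(H)$ count targets $t$ of $i$ with $d_H(t)\ge2$, and let $b_i(H)$ count failed targets with $d_H(t)=1$.
\item Then $a_i(H)+\delta_i(H)=\lambda(r-1)$.
\item Also $\sum_{i\in H}b_i(H)\le p$, because a failed host with $d_H=1$ is private to a single node.
\end{itemize}
If every $i\in H$ had $a_i(H)+b_i(H)\ge q$, then $b_i(H)\ge\lambda(p-r)+1+\delta_i(H)$, and summing over $H$ exceeds $p$:
\begin{itemize}
\item For $2\le r<p$ this follows from $r(p-r+1)>p$.
\item For $r=p$ it follows from $\sum_i\delta_i(F)>0$. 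This is the only place the absence of a pairwise separated $p$-set is used.
\item The case $r=1$ is trivial.
\end{itemize}
Hence some node's objects are all recoverable from data already known. Remove that node and repeat. This may take up to $p$ stages rather than two.
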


\begin{proof}
Set
\[
q=\lambda(p-1)+1.
\]
We prove that every legal placement using $q$ target nodes per object is
recoverable from any set of at most $p$ failed nodes. It suffices to consider
exactly $p$ failed nodes. Fix such a failure set~$F$.

The recovery proceeds iteratively. Initially, let $H=F$. At each stage,
$H$ contains the failed nodes not yet processed by the recovery procedure.
We seek a node $i\in H$ such that every object primary on~$i$ has an alive
parity whose other encodees are already known. We then recover all objects
primary on~$i$, remove $i$ from~$H$, and repeat.

Fix a stage with $H\ne\varnothing$ and put $r=|H|$. For $i\in H$, a target
$t\in\mathcal{S}_{\mathrm{tgt}}(i)$ cannot provide the next recovery step
for objects primary on~$i$ for one of two reasons. First, every parity at~$t$
encoding an object primary on~$i$ also contains an encodee primary on another
node in~$H$. This occurs when $d_H(t)\ge2$, so let
\begin{equation*}
a_i(H):=
\left|
\left\{
t\in\mathcal{S}_{\mathrm{tgt}}(i):d_H(t)\ge2
\right\}
\right|.
\end{equation*}
Second, if $d_H(t)=1$, such a parity contains no encodee primary on another
node in~$H$, but it is still unavailable when its host~$t$ has failed. These
targets satisfy $t\in F$ and $d_H(t)=1$. Let
\begin{equation*}
b_i(H):=
\left|
\left\{
t\in F\cap\mathcal{S}_{\mathrm{tgt}}(i):d_H(t)=1
\right\}
\right|.
\end{equation*}
The two classes are disjoint. Hence the total number of unusable targets
in $\mathcal{S}_{\mathrm{tgt}}(i)$ at the current stage is
\begin{equation}\label{eq:ciH}
c_i(H):=a_i(H)+b_i(H).
\end{equation}
For simplicity, the dependence of $b_i(H)$ and $c_i(H)$ on the fixed
failure set~$F$ is omitted from the notation.

If $c_i(H)<q$, then the $q$-element target set~$ T_x\subseteq\mathcal{S}_{\mathrm{tgt}}(i)$ of any object~$x$
primary on node~$i$ cannot be contained in the set of unusable targets.
Hence some $t\in T_x$ is not counted by $c_i(H)$. By the definitions of
$a_i(H)$ and $b_i(H)$, we have $d_H(t)=1$ and $t\notin F$. Therefore, the
parity encoding~$x$ at node~$t$ is alive and contains no encodee primary
on another node in~$H$. All its other encodees are available from alive
primary nodes or have already been recovered. Thus every object primary
on~$i$ can be recovered at the current stage. It therefore suffices to show
that some $i\in H$ satisfies
\begin{equation}\label{eq:usable-node}
c_i(H)<q.
\end{equation}

We next estimate these two counts. Since every pair of distinct rows has
$\lambda$ common targets, $|B_{ij}|=\lambda$ for every
$j\in H\setminus\{i\}$, we have
\begin{align*}
\sum_{j\in H\setminus\{i\}}|B_{ij}| = \lambda(r-1).
\end{align*}
The sets $B_{ij}$ may not be disjoint. A target~$t$ shared by row~$i$
and $d_H(t)-1$ other rows in~$H$ is counted $d_H(t)-1$ times on the left hand side, but only once in $a_i(H)$. Define this excess multiplicity by
\begin{equation*}
\delta_i(H):=
\sum_{\substack{
t\in\mathcal{S}_{\mathrm{tgt}}(i), d_H(t)\ge2
}}
\bigl(d_H(t)-2\bigr),
\end{equation*}
which measures the excess multiplicity caused by targets shared by row~$i$
and at least two other rows in~$H$.
It follows that
\begin{align*}
a_i(H)+\delta_i(H)
&=
\sum_{\substack{
t\in\mathcal{S}_{\mathrm{tgt}}(i), d_H(t)\ge2
}}
\left[1+\bigl(d_H(t)-2\bigr)\right]
\notag\\
&=
\sum_{\substack{
t\in\mathcal{S}_{\mathrm{tgt}}(i), d_H(t)\ge2
}}
\bigl(d_H(t)-1\bigr)
\end{align*}

We count the pairs $(j,t)$
satisfying $j\in H\setminus\{i\}$ and $t\in B_{ij}$. For a fixed~$j$, there are $|B_{ij}|$ choices of~$t$. For a fixed~$t$, there are $d_H(t)-1$ choices of~$j$. It follows that 
\begin{equation*}
\begin{aligned}
\sum_{j\in H\setminus\{i\}} |B_{ij}|
&=
\sum_{t\in\mathcal{S}_{\mathrm{tgt}}(i)}
\bigl(d_H(t)-1\bigr)\\
&=
\sum_{\substack{
t\in\mathcal{S}_{\mathrm{tgt}}(i), d_H(t)\ge2
}}
\bigl(d_H(t)-1\bigr).
\end{aligned}
\end{equation*}
The final equality follows because a target with $d_H(t)=1$ contributes zero to the sum. Combining the preceding identities gives
\begin{align}
a_i(H)+\delta_i(H) = \lambda(r-1).
\label{eq:block_identity_final}
\end{align}

Moreover, every failed target with $d_H(t)=1$ is counted by $b_i(H)$ for
exactly one node $i\in H$. Therefore
\begin{equation}\label{eq:biH-sum}
\sum_{i\in H}b_i(H)
=
\left|
\left\{
t\in F:d_H(t)=1
\right\}
\right|
\le p.
\end{equation}

Suppose otherwise that $c_i(H)\ge q$ for every $i\in H$. Equations
\eqref{eq:ciH} and~\eqref{eq:block_identity_final} give
\begin{align}
b_i(H)
&\ge q-a_i(H)
\nonumber\\
&=
\lambda(p-1)+1
-\bigl(\lambda(r-1)-\delta_i(H)\bigr)
\nonumber\\
&=
\lambda(p-r)+1+\delta_i(H).
\label{eq:biH-lower}
\end{align}

First suppose that $r=p$. Then $H=F$. Since $A$ contains no
pairwise separated set of size~$p$, some target $t$ satisfies
$d_H(t)\ge3$. Consequently, $\delta_i(H)>0$ for at least one node $i\in H$.
Summing \eqref{eq:biH-lower} over $H$ gives
\[
\sum_{i\in H}b_i(H)
\ge
p+\sum_{i\in H}\delta_i(H)
>p,
\]
contradicting \eqref{eq:biH-sum}.

Next suppose that $2\le r<p$. Since $\lambda\ge1$, summing
\eqref{eq:biH-lower} gives
\begin{align*}
\sum_{i\in H}b_i(H)
&\ge
r\bigl[\lambda(p-r)+1\bigr]
\\
&\ge
r(p-r+1)
\\
&>p,
\end{align*}
where the last inequality follows from
\[
r(p-r+1)-p=(r-1)(p-r)>0.
\]
This again contradicts \eqref{eq:biH-sum}.

Finally, if $r=1$, then $a_i(H)=0$. Since $A_{ii}=0$, at most the other
$p-1$ failed nodes can be targets of node $i$, and hence
\[
c_i(H)=b_i(H)\le p-1<q.
\]
Thus \eqref{eq:usable-node} holds in every case.

Choose such an $i$ satisfying \eqref{eq:usable-node}, recover all objects primary on~$i$, remove $i$ from~$H$,
and repeat. At least one node is removed at each stage, so after at most $p$
stages, $H=\varnothing$ and all lost objects have been recovered.
This proves
\[
q^*(A,p)\le\lambda(p-1)+1.
\]

The constructions for the lower bound used in the proof of
Proposition~\ref{prop:unconditional-lb} require only the replica counts
$p-1$ and $\lambda$, both of which are feasible under
$k\ge\lambda(p-1)+1$. They therefore give
\[
q^*(A,p)\ge\max\{p,\lambda+1\}.
\]
This proves \eqref{eq:no-separated-interval}. When $\lambda=1$, the two
bounds coincide and give
\[
q^*(A,p)=p.
\]
\end{proof}

For $\lambda>1$, the exact value of $q^*(A,p)$ within this interval depends
on finer local incidence information that is not captured by the SBIBD
parameters or by the existence of a pairwise separated set alone. Sharper
bounds for a specific representative might be obtained by introducing additional conditions
that record intersections of multiplicity at least three and their positions relative to
failed parity hosts. However, such refinements would produce a conclusion that increasingly depends on the individual case. We therefore do not refine the case further. 

We now return
to representatives that contain pairwise separated sets.
Pairwise separation alone leaves a gap of one between the bounds in
\eqref{eq:separated-lb}. The following additional condition closes that gap.
Section~\ref{sec:cycle-realization} shows how it can be realized by
choosing an affinity labeling.

\begin{definition}[Private target permutation]
\label{def:private-target-permutation}
Let $F$ be a pairwise separated set with $|F|=p$. We say that $F$ admits a \emph{private target permutation} if there exists a permutation
$\sigma:F\to F$ such that
\begin{equation*}
\sigma(i)\in\mathcal{S}_{\mathrm{tgt}}(i)
\text{ and }
d_F\bigl(\sigma(i)\bigr)=1
\text{ for every }i\in F.
\end{equation*}
\end{definition}

The definition assigns each node~$i\in F$ a private target
$\sigma(i)\in F$. These targets are all distinct because $\sigma$ is a
permutation. Since $A$ has zero diagonal, $\sigma(i)\ne i$. 

For $p=2$, pairwise separation is automatic and  a private target
permutation exists exactly when the two nodes form a reciprocal pair. For $p\ge4$, a private target permutation need not consist of a
single cycle. For example, when $p=4$, it may consist of one cycle of length
four, $(i_1\,i_2\,i_3\,i_4)$, or two disjoint cycles of length two,
$(i_1\,i_2)(i_3\,i_4)$.

\begin{proposition}[Exact threshold under pairwise separation]
\label{prop:private-target-exact}
Let $p\ge3$ and $k\ge\lambda(p-1)+2$. Suppose that $A$ contains at least
one pairwise separated set of size~$p$. Then $q^*(A,p)$ depends on whether any such set admits a private target
permutation:
\begin{equation}
\label{eq:private-target-exact}
q^*(A,p)=
\begin{cases}
\lambda(p-1)+2, & \text{if at least one does},\\
\lambda(p-1)+1, & \text{otherwise}.
\end{cases}
\end{equation}
\end{proposition}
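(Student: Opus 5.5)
The plan is to prove both directions of \eqref{eq:private-target-exact}. In each case one bound is already available. Proposition~\ref{prop:unconditional-lb} gives $q^*(A,p)\le\lambda(p-1)+2$. Proposition~\ref{prop:separated-lb} applies, since some pairwise separated set of size~$p$ exists, and gives $q^*(A,p)\ge\lambda(p-1)+1$. What remains is a lower-bound construction when some set admits a private target permutation, and an upper bound of $\lambda(p-1)+1$ when none does.

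\textbf{Lower bound with a private target permutation.} Let $F$ be pairwise separated with $|F|=p$ and let $\sigma$ be a private target permutation; set $q=\lambda(p-1)+1$.
\begin{itemize}[nosep]
\item For each $i\in F$ take $x_i$ primary on~$i$ with $T_{x_i}=U_i(F)\cup\{\sigma(i)\}$.
\item Every $t\in U_i(F)$ satisfies $d_F(t)=2$ while $d_F(\sigma(i))=1$, so $\sigma(i)\notin U_i(F)$. Hence $|T_{x_i}|=q\le k$, and $T_{x_i}\subseteq\mathcal S_{\mathrm{tgt}}(i)$.
\item As in Proposition~\ref{prop:separated-lb}, at each $t\in B_{ij}$ put $x_i$ and $x_j$ into the same parity. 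At $\sigma(i)$ put $x_i$ into a parity whose other encodees are primary on alive nodes. Because $d_F(\sigma(i))=1$, no other node of~$F$ is a source of $\sigma(i)$, so this is a legal completion.
\item Fail~$F$. Every parity at $\sigma(i)\in F$ is lost. Each surviving parity containing a selected object is hosted in some $B_{ij}\setminus F$ and contains exactly $x_i$ and $x_j$.
\item The shift $x_i\mapsto x_i\XOR w$ for all $i\in F$ therefore leaves all surviving data unchanged, exactly as in the earlier proof.
\end{itemize}
Thus $q^*(A,p)\ge\lambda(p-1)+2$, and the upper bound gives equality.

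\textbf{Upper bound without private target permutations.} Set $q=\lambda(p-1)+1$ and fix any $F$ with $|F|=p$. I will reuse the staged procedure of Proposition~\ref{prop:no-separated-ub}. The key observation is that its argument is per failure set: it only needs some $i\in H$ with $c_i(H)<q$ at each stage. The cases $2\le r<p$ and $r=1$ there never used the absence of pairwise separated sets, so they carry over verbatim. Only the stage $r=p$, $H=F$, needs attention.
\begin{itemize}[nosep]
\item \emph{$F$ not pairwise separated.} Some $t$ has $d_F(t)\ge3$, and the original summation argument applies unchanged.
\item \emph{$F$ pairwise separated.} Here $\delta_i(F)=0$ and $a_i(F)=\lambda(p-1)$ for every $i$, so $c_i(F)<q$ is equivalent to $b_i(F)=0$.
\end{itemize}
Suppose, in the separated case, that every $i\in F$ has $b_i(F)\ge1$. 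Choose for each $i$ some $\sigma(i)\in F\cap\mathcal S_{\mathrm{tgt}}(i)$ with $d_F(\sigma(i))=1$. Since $d_F(\sigma(i))=1$, row~$i$ is the only row of $F$ containing $\sigma(i)$, so $\sigma$ is injective. An injective map $F\to F$ is a permutation, so $\sigma$ is a private target permutation, contradicting the hypothesis. Hence some $i$ has $b_i(F)=0$ and recovery proceeds through all stages. This gives $q^*(A,p)\le\lambda(p-1)+1$, and with the lower bound of Proposition~\ref{prop:separated-lb} the threshold equals $\lambda(p-1)+1$.

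The main obstacle is checking carefully that the stage analysis of Proposition~\ref{prop:no-separated-ub} is genuinely local to the chosen~$F$. One must confirm that the bound \eqref{eq:biH-sum} and the $r<p$ computations hold without any global hypothesis on~$A$. A second point to check is that $k\ge\lambda(p-1)+2$ makes all the placements above legal.
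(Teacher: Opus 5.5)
Your proof is correct and follows the paper's own argument. For the lower bound you use the same targets $U_i(F)\cup\{\sigma(i)\}$ and the global shift by $w$; for the upper bound you reuse the staged recovery of Proposition~\ref{prop:no-separated-ub}. The only cosmetic difference is at the initial stage $H=F$, where you split into separated and non-separated $F$ while the paper extracts $\delta_i(F)=0$ and $b_i(F)=1$ from a single summation; both lead to the same private target permutation contradiction.
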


\begin{proof}
If a pairwise separated
set admits a private target permutation, we extend the nonrecoverable
placement from Proposition~\ref{prop:separated-lb} by assigning one additional
failed target to each selected object. The additional parities are lost, so
the original ambiguity remains. Conversely, if no such permutation exists, we reuse the iterative recovery
framework and blocked target counts from Proposition~\ref{prop:no-separated-ub}.
At every later stage, these counts identify a node whose primary objects can
all be recovered. The only structure that could prevent the initial recovery step would itself define a private target permutation.

First, let $\sigma$ be a private target permutation of a
pairwise separated set~$F$ with $|F|=p$. For each $i\in F$,  following the similar construction used in the proof of Proposition \ref{prop:separated-lb}, choose an object $x_i$ primary on node~$i$ and set
\[
T_{x_i}
=U_i(F)\cup\{\sigma(i)\}.
\]
Every $t\in U_i(F)$ satisfies $d_F(t)=2$, whereas
$d_F\bigl(\sigma(i)\bigr)=1$. Hence
$\sigma(i)\notin U_i(F)$ and
\[
|T_{x_i}|=|U_i(F)|+1=\lambda(p-1)+1.
\]

For every unordered pair $\{i,j\}\subseteq F$ and every $t\in B_{ij}$,
place $x_i$ and $x_j$ in the same parity at node~$t$. Complete the remaining
parity positions with objects whose primary nodes lie outside~$F$ to obtain
a legal placement. 

Fail all nodes in~$F$. Since $\sigma(i)\in F$, the parity hosted at the
added target $\sigma(i)$ is lost.
Every surviving parity encoding some $x_i\ (i\in F)$ is hosted at a
target $t\in B_{ij}$ for a unique $j\in F\setminus\{i\}$ and contains
exactly the two lost objects $x_i$ and~$x_j$. Consequently, replacing every $x_i$
by $x_i\XOR w$, for any fixed nonzero binary word~$w$, leaves all surviving
parity equations unchanged. Thus $q=\lambda(p-1)+1$ does not guarantee unique
recovery, and hence
\[
q^*(A,p)\ge\lambda(p-1)+2.
\]
Theorem~\ref{thm:main} gives the reverse inequality, proving the first case
of~\eqref{eq:private-target-exact}.

Conversely, suppose that no pairwise separated set of size~$p$ admits a
private target permutation, and set
\[
q=\lambda(p-1)+1.
\]
Enlarge any failure set to a set~$F$ of $p$ nodes. Use the notation for each recovery stage and counting identities from the proof of
Proposition~\ref{prop:no-separated-ub}, which apply to every failure
set~$F$.

At a recovery stage $H\subseteq F$, first suppose that
$2\le r=|H|<p$. If $c_i(H)\ge q$ for every $i\in H$, then
\eqref{eq:block_identity_final} gives
\[
p
\ge
\sum_{i\in H}b_i(H)
\ge
r\bigl[\lambda(p-r)+1\bigr]
\ge
r(p-r+1)
>
p,
\]
a contradiction. If $r=1$, the zero diagonal gives
\[
c_i(H)=b_i(H)\le p-1<q.
\]

It remains to consider the initial stage $H=F$. Suppose that
$c_i(F)\ge q$ for every $i\in F$. Equation~\eqref{eq:block_identity_final} 
then gives
\[
b_i(F)\ge1+\delta_i(F) \text{ for every }i\in F.
\]
Together with~\eqref{eq:biH-sum}, this forces
\[
\delta_i(F)=0,\ b_i(F)=1 \text{ for every }i\in F,
\]
and
\[
\sum_{i\in F}b_i(F)=p.
\]
The equalities $\delta_i(F)=0$ show that $F$ is pairwise separated.

Moreover, for every $i\in F$, there is exactly one $ \sigma(i)\in F\cap\mathcal{S}_{\mathrm{tgt}}(i)$
such that $d_F\bigl(\sigma(i)\bigr)=1$.
These values are distinct because a column with incidence degree one
cannot belong to $\mathcal{S}_{\mathrm{tgt}}(i)$ for two different
$i\in F$. Hence $\sigma:F\to F$ is a private target permutation. This contradicts the hypothesis.

Therefore, at every recovery stage, some $i\in H$ satisfies $c_i(H)<q$.
All objects primary on~$i$ can be recovered, after which $i$ is removed
from~$H$. Repeating this step recovers every lost object. Hence
\[
q^*(A,p)\le\lambda(p-1)+1.
\]
Proposition~\ref{prop:separated-lb} gives the reverse inequality, proving
the second case of~\eqref{eq:private-target-exact}.

\end{proof}

Appendix~\ref{app:p3-example} gives an explicit $\SBIBD(16,6,2)$ example
with $p=3$ in which a pairwise separated set admits a private target
permutation.

\subsection{Summary of thresholds for fixed representatives}

%

%
Table~\ref{tab:threshold-summary} summarizes the exact values and structural bounds established above. Its two rows for $p=2$ are distinguished by the
existence of a reciprocal pair. For $p\ge3$, it gives a
mutually exclusive and exhaustive classification according to the
existence of pairwise separated sets of size~$p$, parameter~$\lambda$, and whether any such set
admits a private target permutation.

\begin{table}[htbp]
\caption{Exact values and bounds for the universal replica threshold of a
fixed zero-diagonal SBIBD representative~$A$. The standing feasibility assumptions
are $k\ge1$ for $p=1$ and $k\ge\lambda(p-1)+2$ for $p\ge2$.}
\label{tab:threshold-summary}
\begin{tabularx}{\linewidth}{@{}
>{\raggedright\arraybackslash}X
>{\raggedright\arraybackslash}X
@{}}
\toprule
Failure count $p$ and structural conditions on $A$ & Threshold or current bounds \\
\midrule
$p=1$
  & $q^*(A,1)=1$ \\
\addlinespace[0.8ex]
$p=2$, no reciprocal pair
  & $q^*(A,2)=\lambda+1$ \\

$p=2$, a reciprocal pair exists
  & $q^*(A,2)=\lambda+2$ \\
\addlinespace[0.8ex]
\multicolumn{2}{@{}l}{%
{$p\ge3$, no pairwise separated set of size~$p$}}\\

\hspace{4em}$\lambda=1$
  & $q^*(A,p)=p$ \\

\hspace{4em}{$\lambda>1$}
  & {$\max\{p,\lambda+1\}\le q^*(A,p)
    \le\lambda(p-1)+1$} \\
    
\addlinespace[0.8ex]
\multicolumn{2}{@{}l}{%
{$p\ge3$, a pairwise separated set of size~$p$ exists}}\\[-0.3ex]

\hspace*{4em}No set admits a private target permutation
  & $q^*(A,p)=\lambda(p-1)+1$ \\

\hspace*{4em}Some set admits a private target permutation
  & $q^*(A,p)=\lambda(p-1)+2$ \\
\bottomrule
\end{tabularx}
\end{table}

The exact threshold for a fixed representative remains open when $p\ge3$,
$\lambda>1$, and $A$ contains no pairwise separated set of size~$p$. Appendix~\ref{app:p2-example} gives an explicit zero-diagonal
$\SBIBD(7,4,2)$ representative that realizes the case with a reciprocal pair for
$p=2$. 
Appendices~\ref{app:p3-example} and \ref{app:p3-no-private-example} give two zero-diagonal representatives of the
same $\SBIBD(16,6,2)$ for $p=3$. In one representative, some
pairwise separated set of size~$3$ admits a private target permutation. In the
other, pairwise separated sets of size~$3$ exist, but none admits a private
target permutation.

\section{Existence and relabeling of threshold structures}
\label{sec:realization}
Section~\ref{sec:lower-bounds} identified reciprocal pairs, pairwise separated sets and private target permutations as the structures controlling the replica threshold. Pairwise separation depends only on the underlying SBIBD, whereas a private target permutation may also depend on the affinity labeling.  For $p=2$, pairwise separation is automatic, and a private target
permutation is precisely a reciprocal pair. 

This section first studies the existence of pairwise separated sets and then shows how suitable
zero-diagonal labelings realize private target permutations.

\subsection{Existence and construction of pairwise separated sets}
\label{sec:separated-realization}
The following condition depending only on the parameters gives a constructive guarantee for
the existence of a pairwise separated set of size~$p$.

\begin{proposition}[Sufficient condition for pairwise separation]
\label{prop:separated-existence}
Let $A$ be any representative of an $\SBIBD(v,k,\lambda)$ and let $p\ge3$.
If
\begin{equation}\label{eq:separated-sufficient}
v>(p-1)+\lambda\cdot \binom{p-1}{2}\cdot (k-2),
\end{equation}
then $A$ contains a pairwise separated set of size~$p$.
For $p=3$, the condition becomes
\[
\lambda(k-2)<v-2.
\]
\end{proposition}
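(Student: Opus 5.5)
We build the set greedily, one point at a time; the counting bound on forbidden points is exactly the hypothesis. Suppose $F_m=\{i_1,\dots,i_m\}$ is already pairwise separated, with $1\le m\le p-1$. Any single point, or any pair of distinct points, is trivially pairwise separated. We want a point $u\notin F_m$ such that $F_m\cup\{u\}$ is still pairwise separated.

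Adding $u$ can only create a column $t$ with $d_{F_m\cup\{u\}}(t)\ge3$ if $A_{ut}=1$ and $d_{F_m}(t)\ge2$. Since $d_{F_m}(t)\le2$, such a $t$ has $d_{F_m}(t)=2$. So $t\in B_{ij}$ for a unique unordered pair $\{i,j\}\subseteq F_m$, and $u$ must lie in block $t$. We call $u$ \emph{forbidden} if it lies in some block $t$ with $d_{F_m}(t)=2$.

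We bound the number of forbidden points as follows.
\begin{itemize}
\item By pairwise separation the sets $B_{ij}$ are disjoint, so there are exactly $\lambda\binom{m}{2}$ such columns, each of size $\lambda$.
\item Each such column $t$ is a block of $k$ points, two of which, $i$ and $j$, already lie in $F_m$. It therefore contributes at most $k-2$ new forbidden points outside $F_m$.
\end{itemize}
Hence the number of points that are either in $F_m$ or forbidden is at most
\[
m+\lambda\binom{m}{2}(k-2)\le (p-1)+\lambda\binom{p-1}{2}(k-2),
\]
because $m\le p-1$ and the right-hand side is nondecreasing in $m$. By \eqref{eq:separated-sufficient} this is less than $v$, so an admissible $u$ exists. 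Iterating from $m=1$ up to $m=p-1$ gives a pairwise separated set of size $p$. Setting $p=3$ yields $v>2+\lambda(k-2)$, which is the stated special case.

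The one point needing care is that a point $u$ satisfying the count may still fail to be admissible through some other column. This cannot happen: any column $t$ with $d_{F_m}(t)\le1$ has $d_{F_m\cup\{u\}}(t)\le2$ regardless of $u$, so only columns with $d_{F_m}(t)=2$ matter. The argument uses only the row/column incidence structure, so it holds for any representative, with no need for a zero diagonal. The incidence $A_{it}$ is read as "point $i$ lies in block $t$", which is consistent with the definition of $d_F$.
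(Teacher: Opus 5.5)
Your proof is correct and follows the paper's greedy argument essentially step for step: disjointness of the $B_{ij}$ gives $\lambda\binom{m}{2}$ columns with $d_{F}(t)=2$, each excludes $k-2$ outside rows, and monotonicity in $m$ reduces everything to the case $m=p-1$. One wording slip: in ``each of size $\lambda$'' it is the sets $B_{ij}$ that have size $\lambda$, not the columns.
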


\begin{proof}
Start with any two row indices, which automatically form a pairwise separated
set. Suppose that a pairwise separated set~$F$ with
$|F|=m<p$ has been selected. The sets $B_{ij}$ over all unordered pairs
$\{i,j\}\subseteq F$ are mutually disjoint. Since $|B_{ij}|=\lambda$, there are exactly
\[
\lambda\cdot \binom{m}{2}
\]
column indices~$t$ satisfying $d_F(t)=2$.

For each such~$t$, exactly $k-2$ row indices $u\notin F$ satisfy
$A_{ut}=1$. None of these indices can be added to $F$, since doing so would give
$d_{F\cup\{u\}}(t)=3$. Thus at most
\[
\lambda\binom{m}{2}(k-2)
\]
indices outside $F$ are excluded. A new index can therefore be selected whenever
\[
v>m+\lambda\binom{m}{2}(k-2).
\]
The right hand side increases with $m$. Condition
\eqref{eq:separated-sufficient}, which is the case $m=p-1$, guarantees that
the construction continues until $|F|=p$.

The proof also gives a greedy construction. At each step, choose any
$u\notin F$ satisfying
\[
A_{ut}=0, \text{ for every $t$ such that $d_F(t)=2$}.
\]
\end{proof}

Pairwise separated sets are not automatic, even under the feasibility
condition $k\ge\lambda(p-1)+2$. For any prime power $\rho\ge3$, incidence between points and planes in $\mathrm{PG}(3,\rho)$ gives a symmetric design with parameters
\[
(v,k,\lambda)=(\rho^3+\rho^2+\rho+1,\rho^2+\rho+1,\rho+1).
\]
Any three points lie in a plane, so no three rows are pairwise separated.
Moreover, $k\ge2\lambda+2$ for $\rho\ge3$, making this a feasible family for
$p=3$. The smallest member has parameters $(40,13,4)$~\cite{colbourn2010}.
Since the absence of a pairwise separated set of size~$3$ excludes every
larger such set, the same family is feasible for every $3\le p\le\rho$, because
$k\ge\lambda(p-1)+2$ in precisely that range.
Equivalently, whenever $\lambda-1$ is a prime power and
$3\le p\le\lambda-1$, this construction gives an explicit design with
$k=\lambda^2-\lambda+1$ that contains no pairwise separated set.

\begin{corollary}[Pairwise separated sets for large $k$]
\label{cor:separated-asymptotic}
Fix $p\ge3$ and $\lambda\ge1$. There exists $K=K(p,\lambda)$ such that,
whenever $k\ge K$ and an $\SBIBD(v,k,\lambda)$ exists, every such design
contains a pairwise separated set of size~$p$.
\end{corollary}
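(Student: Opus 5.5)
The plan is to derive the corollary directly from Proposition~\ref{prop:separated-existence}, using the SBIBD identity~\eqref{eq:sbibd-identity} to turn the sufficient condition~\eqref{eq:separated-sufficient} into a condition on $k$ alone. The key observation is a mismatch in growth rates. With $\lambda$ fixed, $v$ grows quadratically in $k$, whereas the right-hand side of~\eqref{eq:separated-sufficient} grows only linearly in $k$ once $p$ and $\lambda$ are fixed. Hence the condition must hold for all sufficiently large $k$. Since Proposition~\ref{prop:separated-existence} applies to \emph{any} representative of \emph{any} $\SBIBD(v,k,\lambda)$, the conclusion holds for every such design, not merely for one of them.

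First, I substitute $v=k(k-1)/\lambda+1$ into~\eqref{eq:separated-sufficient}. This makes the condition equivalent to
\[
\frac{k(k-1)}{\lambda}+1>(p-1)+\lambda\binom{p-1}{2}(k-2).
\]
Multiplying by $\lambda>0$ gives the quadratic inequality
\[
g(k):=k^2-k-\lambda^2\binom{p-1}{2}(k-2)-\lambda(p-2)>0.
\]
Set $C:=\lambda^2\binom{p-1}{2}$. Then
\[
g(k)=k^2-(1+C)k+2C-\lambda(p-2),
\]
which is a monic quadratic in $k$. It is therefore positive for all $k$ larger than its largest real root, or for all $k$ if it has no real root.

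For an explicit threshold, I expect $K:=C+\lambda(p-2)+2$ to work. For $k\ge K$ we have $k-(1+C)\ge\lambda(p-2)+1$, and so
\[
g(k)=k\bigl(k-1-C\bigr)+2C-\lambda(p-2)\ge k\bigl(\lambda(p-2)+1\bigr)-\lambda(p-2)>0,
\]
since $k\ge1$. Any cruder choice that makes positivity evident is equally acceptable. For every $k\ge K$ and every $\SBIBD(v,k,\lambda)$, condition~\eqref{eq:separated-sufficient} then holds, and Proposition~\ref{prop:separated-existence} yields a pairwise separated set of size $p$ in every representative.

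There is no real obstacle here. The only care points are two: keeping the inequality direction correct after multiplying through by $\lambda$, and noting that pairwise separation is a property of the design itself. The second point is what lets the conclusion read ``every such design'' independently of the labeling.
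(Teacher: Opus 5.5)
Your proposal is correct and follows the paper's proof: substitute $v=k(k-1)/\lambda+1$ into \eqref{eq:separated-sufficient} and observe that the left side is quadratic in $k$ while the right side is linear, so Proposition~\ref{prop:separated-existence} applies to every design for large $k$. Your algebra checks out, and your explicit threshold $K=\lambda^2\binom{p-1}{2}+\lambda(p-2)+2$ gives a concrete value for the paper's ``sufficiently large''.
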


\begin{proof}
Substituting $v=k(k-1)/\lambda+1$ into
\eqref{eq:separated-sufficient} gives
\[
\frac{k(k-1)}{\lambda}+1
>
(p-1)+\lambda\binom{p-1}{2}(k-2).
\]
For fixed $p$ and $\lambda$, the left hand side is quadratic in $k$ and the
right hand side is linear in $k$. Hence the inequality holds for all
sufficiently large feasible values of $k$.
\end{proof}

Consequently, for fixed $p$ and $\lambda$, a design without a
pairwise separated set of size~$p$ can occur for finitely many possible values of $k$.

\subsection{Private target permutations under zero-diagonal relabeling}
\label{sec:reciprocal-realization}
\label{sec:cycle-realization}

\begin{proposition}[Realizing private target permutations]
\label{prop:private-target-relabel}
Suppose $p\ge2$ and $k\ge\lambda(p-1)+2$. Let $A$ be a representative of an $\SBIBD(v,k,\lambda)$. If $A$ contains a pairwise separated set~$F$ of size~$p$, then the columns of $A$ can be relabeled to obtain a zero-diagonal
representative~$A'$ in which $F$ admits a private target permutation.
Consequently,
\begin{equation}\label{eq:private-target-relabel}
q^*(A',p)=\lambda(p-1)+2.
\end{equation}
\end{proposition}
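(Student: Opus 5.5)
The plan is to construct $A'$ directly as a column permutation of $A$. I will write $A'_{uj}=A_{u,\pi(j)}$ for a bijection $\pi$ of $\{0,1,\cdots,v-1\}$, which is a relabeling of the blocks and so gives another representative. Since $d_F$ only counts the entries of a column, $A'$ has the same multiset of columns restricted to $F$ as $A$. Hence $F$ remains pairwise separated in $A'$. Three requirements must then be met:
\begin{itemize}
\item zero diagonal: $A_{u,\pi(u)}=0$ for every $u$;
\item for each $i\in F$, the column $\pi(\sigma(i))$ of $A$ has a $1$ in row~$i$;
\item that column has no other $1$ among the rows indexed by $F$.
\end{itemize}
Once these hold, the consequence \eqref{eq:private-target-relabel} follows immediately from Proposition~\ref{prop:private-target-exact} for $p\ge3$, and from Proposition~\ref{prop:p2-exact} for $p=2$, where a private target permutation is a reciprocal pair.

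\textbf{Private columns and the prescribed entries.} Fix $i\in F$. Row $i$ of $A$ has $k$ ones. Among these, exactly $\lambda(p-1)$ lie in columns with $d_F=2$, because the sets $B_{ij}$ with $j\in F\setminus\{i\}$ are disjoint by pairwise separation. Every other one of row~$i$ lies in a column with $d_F=1$. Thus row $i$ has at least $k-\lambda(p-1)\ge2$ private columns. Choose one such column $c_i$ for each $i\in F$; the $c_i$ are distinct because each is private to its own row. Let $\sigma$ be any cyclic permutation of $F$, which has no fixed points since $p\ge2$, and prescribe $\pi(\sigma(i))=c_i$. This is compatible with the zero diagonal: the diagonal entry is $A_{\sigma(i),c_i}$, and it is $0$ because $\sigma(i)\in F\setminus\{i\}$ while column $c_i$ has its only $F$-incidence in row $i$.

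\textbf{Extending $\pi$ (the main obstacle).} It remains to extend $\pi$ to a bijection from $R:=\{0,\cdots,v-1\}\setminus F$ onto $C:=\{0,\cdots,v-1\}\setminus\{c_i:i\in F\}$ with $A_{u,\pi(u)}=0$. This is a perfect matching in the bipartite graph $G$ of zero entries restricted to $R\times C$, and I will verify Hall's condition. The zero pattern of $A$ is the incidence structure of the complementary design, so the full zero graph is $(v-k)$-regular. By the SBIBD property, two distinct rows (respectively columns) have zeros in $v-2k+\lambda$ common positions. Hence the zero sets of any two distinct rows cover $v-\lambda$ columns, and likewise any two distinct columns cover $v-\lambda$ rows. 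In $G$ we lose at most $p$ vertices on the opposite side.

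Suppose some $S\subseteq R$ violates Hall's condition, and put $T:=C\setminus N_G(S)$. Then $N_G(T)\subseteq R\setminus S$ and $|T|>|R\setminus S|$. Consider the size of $S$:
\begin{itemize}
\item If $|S|=1$, then $|N_G(S)|\ge v-k-p\ge1$, which holds since $v-1=k(k-1)/\lambda$ is much larger than $k+p$ under $k\ge\lambda(p-1)+2$. This contradicts the violation.
\item If $|S|\ge2$, then $|N_G(S)|\ge v-\lambda-p$, so $|S|>v-\lambda-p$ and therefore $|R\setminus S|<\lambda$.
\end{itemize}
For $T$, the symmetric bounds apply: $|T|\le1$ is handled as the $|S|=1$ case was, and $|T|\ge2$ gives $|R\setminus S|\ge|N_G(T)|\ge v-\lambda-p$. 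Combining with $|R\setminus S|<\lambda$ yields $v<2\lambda+p$, which contradicts the size of $v$ forced by $k\ge\lambda(p-1)+2$. The delicate step is checking these small inequalities uniformly, including the degenerate case $T=\varnothing$, which is trivial. If the Hall argument turns out to be awkward, I will instead decompose the regular zero graph into perfect matchings via K\H{o}nig's theorem and repair one of them by alternating swaps. Either way, the extension yields the desired zero-diagonal $A'$.
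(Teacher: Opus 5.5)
\textbf{Overall.} Your construction is the paper's: the private columns $c_i$, a cyclic $\sigma$, the prescription $\pi(\sigma(i))=c_i$, and the reduction to a perfect matching of the zero graph on $R\times C$. For $p\ge3$ your Hall verification is valid and differs from the paper's. You use that the zero sets of two rows (or two columns) cover $v-\lambda$ positions, which needs $v\ge 2\lambda+p$. The paper uses minimum degree $\ge (v-p)/2$, which needs $v\ge 2k+p$. For $p\ge3$ your inequality does follow, since $k\ge\lambda(p-1)+2$ gives $v>\lambda(p-1)^2+3(p-1)+1>2\lambda+p$.

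\textbf{The gap is at $p=2$.} There $v\ge 2\lambda+p$ is not implied by $k\ge\lambda+2$. Take the complement of a projective plane of order $q\ge3$, with parameters $(q^2+q+1,q^2,q^2-q)$. For example, in $\SBIBD(13,9,6)$ we have $k=9\ge\lambda+2=8$, but $v=13<14=2\lambda+2$. In the resulting $11\times 11$ graph your bounds still permit a Hall violator with
\[
|S|=6,\quad |N_G(S)|=5,\quad |R\setminus S|=5,\quad |T|=6,\quad |N_G(T)|=5,
\]
so no contradiction is reached. Every $2$-set is pairwise separated, so this infinite family lies inside the statement, and for it the extension of $\pi$ is unproved. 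Your singleton bound $v-k-p\ge1$ does hold for $p=2$, but only just: it is an equality for $\SBIBD(7,4,2)$, so ``much larger'' is inaccurate.

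\textbf{How to close it.} This is why the paper treats $p=2$ separately. The zero graph is the incidence graph of the complementary symmetric design, hence a bipartite distance-regular graph of valency $d=v-k\ge3$. By \cite[Theorem~3.17]{cioaba2017}, every set of at most $\lfloor (d+1)/2\rfloor\ge2$ independent edges lies in a perfect matching. Applied to the two prescribed edges $(j,c_i)$ and $(i,c_j)$, this yields $\pi$ directly. If you want to stay elementary, you would need a sharper Hall count for $p=2$, for instance bounding $|N(S)|$ for larger $S$ using more than two rows. Your fallback of a K\H{o}nig decomposition repaired by alternating swaps is not an argument as stated, because nothing in it forces the two prescribed edges into the final matching.
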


\begin{proof}
We treat $p=2$ and $p\ge3$ separately. 

First let $p=2$.  Let $F=\{i,j\} \subseteq \{0,1,\cdots,v-1\}$ be an arbitrary set of size two. Since $d_F(t)\le2$ for every column~$t$, the set~$F$ is automatically pairwise separated. 

Construct a bipartite graph whose two vertex parts are the row indices $\{0,1,\cdots,v-1\}$ and column indices $\{0,1,\cdots,v-1\}$. Connect a row vertex $h$ and a column vertex $t$ if and only if $A_{ht}=0$. Since every row and column of~$A$ contains exactly $v-k$ zero entries, every vertex has degree
\[
d:= v-k=\frac{(k-1)(k-\lambda)}{\lambda}.
\] 
The assumption $k\ge\lambda+2$ implies that $d>2$. Since $d$ is an
integer, we have $d\ge3$. The graph is the incidence graph of the complementary
symmetric design, so it is bipartite and distance regular.

Since
\[
\bigl|
\mathcal{S}_{\mathrm{tgt}}(i)
\setminus
\mathcal{S}_{\mathrm{tgt}}(j)
\bigr|
=
\bigl|
\mathcal{S}_{\mathrm{tgt}}(j)
\setminus
\mathcal{S}_{\mathrm{tgt}}(i)
\bigr|
=
k-\lambda\ge2,
\]
there exist two distinct 
columns $\tau_i$ and $\tau_j$ such that
\[
\tau_i\in
\mathcal{S}_{\mathrm{tgt}}(j)
\setminus
\mathcal{S}_{\mathrm{tgt}}(i)
\]
and
\[
\tau_j\in
\mathcal{S}_{\mathrm{tgt}}(i)
\setminus
\mathcal{S}_{\mathrm{tgt}}(j).
\]
By their definitions,
\[
A_{i\tau_i}=A_{j\tau_j}=0,\ A_{j\tau_i}=A_{i\tau_j}=1.
\]

The two zero entries at $(i,\tau_i)$ and $(j,\tau_j)$ lie in different
rows and different columns in $A$. Hence the corresponding edges $(i,\tau_i)$ and $(j,\tau_j)$ in the graph have no common endpoint. By
\cite[Theorem~3.17]{cioaba2017}, in a bipartite graph that is distance regular
of degree~$d$, every set of at most
$\lfloor(d+1)/2\rfloor$ edges with no common endpoint is contained in
a perfect matching. Since $d\ge3$, we have
$\lfloor(d+1)/2\rfloor\ge2$. Hence some perfect matching contains both $(i,\tau_i)$ and $(j,\tau_j)$.

A perfect matching pairs every row vertex with a distinct column vertex. Consequently, it defines a bijection
\[
\pi: \{0,1,\cdots,v-1\}\longrightarrow \{0,1,\cdots,v-1\}
\]
such that
\[
A_{h,\pi(h)}=0, \text{ for every }h\in\{0,1,\cdots,v-1\},
\]
with
\[
\pi(i)=\tau_i,\ \pi(j)=\tau_j.
\]

Define the relabeled representative~$A'$
by
\[
A'_{rh}:=A_{r,\pi(h)},\ r,h\in\{0,1,\cdots,v-1\},
\]
then
\[
A'_{hh}=A_{h,\pi(h)}=0,\ h\in\{0,1,\cdots,v-1\}.
\]
$A'$ is zero-diagonal. Since $\pi(i)=\tau_i$ and
$\pi(j)=\tau_j$,
\[
A'_{ji}=A_{j,\pi(i)}=A_{j\tau_i}=1,\ A'_{ij}=A_{i,\pi(j)}=A_{i\tau_j}=1,
\]
we have that $F=\{i,j\}$  is a reciprocal pair in~$A'$. Define
\[
\sigma(i)=j,\ \sigma(j)=i,
\]
then $\sigma$ is a private target permutation of~$F$ in~$A'$.

Now let $p\ge3$, and write the 
pairwise separated set as
\[
F=\{i_1,i_2,\cdots,i_p\}.
\]
For each fixed $r\in\{1,2,\cdots,p\}$, 
\[
\mathcal{S}_{\mathrm{tgt}}(i_r)\setminus U_{i_r}(F)
=
\bigl\{
t\in\mathcal{S}_{\mathrm{tgt}}(i_r):d_F(t)=1
\bigr\}.
\]
Because
\[
\bigl|\mathcal{S}_{\mathrm{tgt}}(i_r)\bigr|=k,\ |U_{i_r}(F)| = \lambda(p-1),
\]
we have
\[
\bigl|\bigl\{
t\in\mathcal{S}_{\mathrm{tgt}}(i_r):d_F(t)=1
\bigr\}\bigr| =k-\lambda(p-1)\ge2.
\]
Choose one such column and denote it by~$t_r$, i.e.,
\[
t_r\in\mathcal{S}_{\mathrm{tgt}}(i_r),\ d_F(t_r)=1.
\]
Repeating this choice for
$r\in\{1,2,\cdots,p\}$ gives a family of columns
$t_1,t_2,\cdots,t_p$. These columns are distinct. Otherwise, if
$t_r=t_s$ for $r\ne s$, the common column would belong to both
$\mathcal{S}_{\mathrm{tgt}}(i_r)$ and
$\mathcal{S}_{\mathrm{tgt}}(i_s)$, giving $d_F(t_r)\ge2$.

Define $i_{p+1}=i_1$. Since $i_{r+1}\ne i_r$ and $d_F(t_r)=1$, we have
\[
t_r\notin\mathcal{S}_{\mathrm{tgt}}(i_{r+1}),
\]
and hence
\[
A_{i_{r+1},t_r}=0.
\]
The row indices $i_{r+1}$ are distinct, and the column indices $t_r$
are distinct. Hence the $p$ zero entries $A_{i_{r+1},t_r},\ r\in\{1,2,\cdots,p\}$ lie in different rows and different columns.

Delete the rows in~$F$ and the columns
$t_1,t_2,\cdots,t_p$ from~$A$. The resulting submatrix has
$v-p$ rows and $v-p$ columns. Since every row and column of~$A$
contains $v-k$ zero entries, every row and column of the submatrix
contains at least $v-k-p$ zero entries. The assumption $k\ge\lambda(p-1)+2$ gives
\[
\lambda\le\frac{k-2}{p-1}.
\]
Using the SBIBD identity, we obtain
\[
v=\frac{k(k-1)}{\lambda}+1
\ge
\frac{(p-1)k(k-1)}{k-2}+1
\ge
2k+p.
\]
The last inequality follows from
\[
\frac{(p-1)k(k-1)}{k-2}+1-(2k+p)
=
(p-3)k+\frac{2(p-1)}{k-2}
\ge0.
\]
Therefore,
\[
v-k-p\ge\frac{v-p}{2}.
\]

Let $R_0:=\{0,1,\cdots,v-1\}\setminus F$ and $C_0:= \{0,1,\cdots,v-1\}\setminus \{t_1,t_2,\cdots,t_p\}.$
Construct a bipartite graph with vertex parts $R_0$ and~$C_0$, in which
row vertex $h$ and column vertex~$t$ are connected if and only if $A_{ht}=0$.
Both parts have size
\[
n:=v-p,
\]
and every vertex has degree at least
\[
v-k-p\ge\frac n2.
\]

We verify Hall's condition. Let $S\subseteq R_0$, let $N(S) \subseteq  C_0$ be the set of column indices
adjacent to at least one element of $S$. If
$1\le |S|\le n/2$, then the neighbors of any row vertex in~$S$
already give
\[
|S| \le \frac n2\le |N(S)|.
\]
If $|S|>n/2$, suppose that some column vertex lies outside~$N(S)$.
All its neighbors would then lie in $R_0\setminus S$, which contains
fewer than $n/2$ vertices. This contradicts its degree being at least
$n/2$. Hence $N(S)=C_0$, and 
\[
|S| \le n = |N(S)|.
\]
Hall's theorem therefore gives a perfect matching between $R_0$
and~$C_0$. Consequently, there is a bijection
\[
\pi_0:R_0\longrightarrow C_0
\]
such that
\[
A_{h,\pi_0(h)}=0 \text{ for every }h\in R_0.
\]

Extend $\pi_0$ to a bijection
\[
\pi:\{0,1,\cdots,v-1\}\longrightarrow
\{0,1,\cdots,v-1\}
\]
by setting
\[
\pi(i_{r+1})=t_r\ (r\in\{1,2,\cdots,p\}).
\]
Then
\[
A_{h,\pi(h)}=0 \text{ for every }h\in\{0,1,\cdots,v-1\}.
\]
Keep the row labels fixed and define the relabeled representative~$A'$
by
\[
A'_{h\ell}:=A_{h,\pi(\ell)}\ (h,\ell\in\{0,1,\cdots,v-1\}).
\]
It follows that
\[
A'_{hh}=A_{h,\pi(h)}=0,
\]
so $A'$ is zero-diagonal. Moreover,  for $r\in\{1,2,\cdots,p\}$,
\[
A'_{i_r,i_{r+1}}
=
A_{i_r,\pi(i_{r+1})}
=
A_{i_r,t_r}
=
1
\]
and
\[
\sum_{s=1}^{p}A'_{i_s,i_{r+1}}
=
\sum_{s=1}^{p}A_{i_s,t_r}
=
1.
\]

The column relabeling also preserves pairwise separation because
\[
\sum_{i\in F}A'_{i\ell}
=
\sum_{i\in F}A_{i,\pi(\ell)}
\le2, \text{ for $\ell\in\{0,1,\cdots,v-1\}$}.
\]
 Therefore, the cyclic
permutation
\[
\sigma(i_r)=i_{r+1},\ r\in\{1,2,\cdots,p\}
\]
is a private target permutation of~$F$ in~$A'$.

For $p=2$, Proposition~\ref{prop:p2-exact} gives
\eqref{eq:private-target-relabel}. For $p\ge3$,
Proposition~\ref{prop:private-target-exact} gives the same conclusion.
\end{proof}

For $p=2$, Proposition~\ref{prop:private-target-relabel} shows that
every underlying SBIBD admits at least one zero-diagonal representative
containing a reciprocal pair. It does not assert that every zero-diagonal
representative contains such a pair. For example, the cyclic representative~$A$ of an
$\SBIBD(7,3,1)$ defined by
\[
A_{ij}=1 \Longleftrightarrow j-i\pmod 7\in\{1,2,4\}
\]
has zero diagonal but no reciprocal pair, because
$\{1,2,4\}\cap\{-1,-2,-4\}=\varnothing$ modulo~$7$. Hence this representative
has threshold $q^*(A,2)=2=\lambda+1$, although another zero-diagonal affinity
labeling of the same $\SBIBD$ yields a representative with threshold~$3$.

Proposition~\ref{prop:private-target-relabel} concerns the existence of a
zero-diagonal affinity labeling whose representative attains the upper
endpoint. Changing the point and block labels independently preserves the
SBIBD but need not preserve the zero diagonal. A simultaneous node relabeling
$A\mapsto \Pi A\Pi^{\mathsf T}$, where $\Pi$ is a permutation matrix,
preserves both the zero diagonal and the existence or nonexistence of a
private target permutation, so it cannot create one in a fixed representative.
The matching construction instead chooses a different affinity labeling whose
representative remains zero-diagonal.

Combining Corollary~\ref{cor:separated-asymptotic} and
Proposition~\ref{prop:private-target-relabel}, for fixed $p\ge3$ and~$\lambda$
and all sufficiently large feasible~$k$, every existing SBIBD admits a
zero-diagonal affinity labeling whose representative~$A'$ satisfies
$q^*(A',p)=\lambda(p-1)+2$. This is still not a statement that every fixed
zero-diagonal representative attains that endpoint. The threshold therefore depends on local incidence structures that are not
determined by the parameters $(v,k,\lambda)$ alone. 


\section{Discussion}\label{sec:disc}

This section discusses three aspects of the generalized stripeless erasure coding scheme studied in this paper, i.e., parameter flexibility relative to Nos, the selection
of SBIBDs and their affinity labelings, and the resulting failure tolerance
and redundancy. Compared to Nos, our construction provides more flexible
parameter choices. For a suitable SBIBD representative~$A$, the required replica count can be reduced by
one.

\subsection{Parameter flexibility beyond Nos}

Nos uses an SBIBD with $\lambda=1$~\cite{nos2025}. The SBIBD identity then
reduces to
\[
v=k(k-1)+1,
\]
which restricts the admissible combinations of the cluster size~$v$ and the
number~$k$ of admissible target nodes. Theorem~\ref{thm:main} confirms the
Nos sufficient count
\[
q=p+1
\]
for $p\ge2$ under $k\ge p+1$. For $p=1$, the exact count is $q=1$.

Allowing any admissible value of~$\lambda$ changes the SBIBD identity to
\[
v=\frac{k(k-1)}{\lambda}+1.
\]
This provides more flexibility in choosing the cluster size $v$ and the number of
admissible target nodes $k$. For example, an $\SBIBD(11,5,2)$ exists, whereas no
$\SBIBD(11,k,1)$ exists because $k^2-k+1=11$ has no integral solution
\cite{colbourn2010}. For $p=2$, Theorem~\ref{thm:main} applies to the
former design with $q=4$.

This additional parameter flexibility involves a tradeoff. For fixed~$k$, a
larger value of~$\lambda$ reduces the value of~$v$ permitted by the design
identity, but increases the sufficient replica count
\[
q=\lambda(p-1)+2.
\]
For prescribed $v$ and~$p$, the parameters must therefore satisfy
\begin{equation*}
v=\frac{k(k-1)}{\lambda}+1,\ k\ge q,
\end{equation*}
and an SBIBD with these parameters must exist. 

\subsection{Selecting a design and an affinity labeling}

The results in Sections~\ref{sec:lower-bounds} and
\ref{sec:realization} provide criteria for selecting an SBIBD and its affinity
labeling when the goal is to minimize the replica threshold.
The selection has two levels. The parameters and the underlying block design
determine the admissible incidence structure. The affinity labeling then
determines which points and blocks represent the same storage nodes.
Consequently, matrices obtained from different zero-diagonal affinity
labelings of the same SBIBD may have different replica thresholds.

For $p=2$, a zero-diagonal representative without a reciprocal pair is
preferable whenever one exists. Proposition~\ref{prop:p2-exact} gives
\[
q^*(A,2)=\lambda+1
\]
for such a representative, compared with
\[
q^*(A,2)=\lambda+2
\]
when a reciprocal pair exists.

For $p\ge3$, the structure that attains the sufficient endpoint is a
pairwise separated set of size~$p$ that admits a private target permutation.
If some such set exists, Proposition~\ref{prop:private-target-exact} gives
\[
q^*(A,p)=\lambda(p-1)+2.
\]
If pairwise separated sets of size~$p$ exist but none admits a private target
permutation, then
\[
q^*(A,p)=\lambda(p-1)+1.
\]
Choosing an underlying design with no pairwise separated set of size~$p$ is
another sufficient way to reduce the replica thresholds. In this case,
Proposition~\ref{prop:no-separated-ub} gives
\[
q^*(A,p)\le\lambda(p-1)+1.
\]
When $\lambda=1$, this bound is exact and gives $q^*(A,p)=p$.

In particular, Nos uses $\lambda=1$ and the general sufficient replica
count $q=p+1$ for $p\ge2$. Whenever a suitable zero-diagonal
representative is available, the structural refinements developed above
reduce this count by one, from $p+1$ to~$p$. For $p=2$, this reduction is
obtained by choosing a representative without a reciprocal pair. For
$p\ge3$, it is obtained by choosing a representative in which no
pairwise separated set of size~$p$ admits a private target permutation.
An underlying design with no pairwise separated set of size~$p$ is another
way to ensure the latter condition. In either case, the amortized parity
storage per primary object decreases from $(p+1)/k$ to $p/k$.

\subsection{Failure tolerance and redundancy}
For fixed SBIBD parameters $(v, k, \lambda)$, the largest number of node failures certified by Theorem~\ref{thm:main} is
\begin{equation*}
p_{\mathrm{universal}}
=
\max\left\{
1,
\left\lfloor\frac{k-2}{\lambda}\right\rfloor+1
\right\}.
\end{equation*}
Each data object contributes to~$q$ parities, and every parity combines~$k$
objects. Since the temporary replicas are deleted after parity formation, the
amortized persistent parity storage per primary object is
\[
\frac{q}{k}.
\]
Including the primary copy gives the total storage ratio
\[
1+\frac{q}{k}.
\]
The transmission cost of encoding is~$q$ replicas per object.

Theorem~\ref{thm:main} gives the universally sufficient ratio of parity storage
\begin{equation*}
\eta_{\mathrm{suff}}=
\begin{cases}
1/k, & p=1,\\
[\lambda(p-1)+2]/k, & p\ge2.
\end{cases}
\end{equation*}
For a fixed representative $A$ and a failure tolerance target $p$, using the exact threshold changes this ratio to
\[
\frac{q^*(A,p)}{k}.
\]

A reduction of one replica in the replica threshold reduces the encoding traffic by one
replica per object and at the same time reduces the persistent ratio of parity storage by $1/k$.

%
%
%
%
%

\section{Conclusion}\label{sec:conclusion}
This paper studied the recovery threshold of SBIBD based stripeless erasure
coding. We refined the sufficient recovery guarantee by obtaining the exact replica count for $p=1$, identifying direct recovery for $p=2$, and making the two round recovery structure explicit for $p\ge 3$. The resulting guarantee applies uniformly to all legal object
placements.

We then investigated when the universal sufficient replica count can be
reduced. By relating the recovery threshold to structural properties of a
fixed SBIBD representative, we obtained exact thresholds for several
classes. The exact threshold remains open when $p\ge3$, $\lambda>1$, and the fixed
representative~$A$ contains no pairwise separated set of size~$p$. This case
appears to depend on finer incidence information that is not captured by the
present structures, and we leave its exact characterization as an open
problem.

Finally, we studied the realization of the structural conditions used in
the threshold analysis. The existence and relabeling results
connect the theoretical classification with the selection of SBIBDs and
affinity labelings for coding applications.


\bibliography{references}
\bibliographystyle{IEEEtran}

\clearpage

\appendix
\section{Supplementary proofs and examples}
\label{app:supplementary}

The appendix gives supplementary proofs and explicit representatives
referenced in the main text.

\subsection{Existence of zero-diagonal representatives}
\label{app:zero-diagonal}

\begin{proposition}
\label{prop:zero-diagonal-existence}
Every $\SBIBD(v,k,\lambda)$ admits a zero-diagonal representative.
\end{proposition}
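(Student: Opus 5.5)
Starting from an arbitrary representative $A$ of the $\SBIBD(v,k,\lambda)$, the plan is to relabel only the blocks, that is, to permute the columns of $A$, so that the diagonal becomes zero. As in the proof of Proposition~\ref{prop:private-target-relabel}, if a bijection $\pi$ of $\{0,1,\cdots,v-1\}$ satisfies $A_{h,\pi(h)}=0$ for every $h$, then $A'_{rh}:=A_{r,\pi(h)}$ is again a representative of the same design, because point and block labelings are independent. Its diagonal entries are $A'_{hh}=A_{h,\pi(h)}=0$. So it suffices to find a perfect matching in the bipartite graph $G$ with rows on one side and columns on the other, where row $h$ is joined to column $t$ exactly when $A_{ht}=0$.

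Every row and every column of $A$ contains exactly $v-k$ zeros, so $G$ is $(v-k)$-regular. Since $k<v$ by Definition~\ref{def:sbibd}, the degree $d=v-k$ is at least $1$. A $d$-regular bipartite graph with $d\ge1$ has a perfect matching, which is König's theorem. I would verify this directly through Hall's condition. For a set $S$ of rows, the $d|S|$ edges leaving $S$ all land in $N(S)$. Each column in $N(S)$ absorbs at most $d$ of them, so $d|S|\le d|N(S)|$, which gives $|S|\le|N(S)|$. Hall's theorem then yields the perfect matching, and hence the bijection $\pi$.

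Finally, I would note that $A'$ is a legitimate representative of the same SBIBD. Permuting columns leaves every block's point set unchanged, so block sizes, point replication, and pairwise intersections $\lambda$ are preserved. I expect no real obstacle. The only points needing care are that $v-k\ge1$ is guaranteed by $k<v$, and that relabeling blocks independently of points is allowed by the definition of an affinity labeling.
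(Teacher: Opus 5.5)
Your proposal is correct and follows the paper's proof essentially step for step. Both use the same bipartite graph on zero entries, the same double-counting check of Hall's condition for the $(v-k)$-regular graph (with $v-k\ge1$ because $k<v$), and the same column relabeling $A'_{hi}=A_{h,\pi(i)}$ obtained from the resulting perfect matching.
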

\begin{proof}
Let $A$ be an arbitrary representative of the given SBIBD. Let $R$ and $C$ denote the row and column index sets respectively, with
\[
R=C=\{0,1,\cdots,v-1\}.
\]
They are regarded as two distinct parts of a bipartite graph. Define its
edge set by
\[
E:=\{(i,j)\in R\times C:A_{ij}=0\}.
\]

Every row and column of $A$ contains exactly $k$ entries equal to~$1$ and
$v-k$ entries equal to~$0$, thus every vertex has degree
\[
d=v-k>0.
\]
For a subset of row indices $S\subseteq R$, let $N(S) \subseteq  C$ be the set of column indices
adjacent to at least one element of~$S$. 
Every row vertex in $S$ has degree $d$, and all its edges end in
$N(S)$, so there are exactly $d|S|$ edges between $S$ and $N(S)$. Each column vertex in $N(S)$ is incident with at most $d$ of these edges, so
\[
d|S|\le d|N(S)|.
\]
Since $d>0$, we obtain $|S|\le|N(S)|$. Thus Hall's condition holds for every
$S\subseteq R$. Since $|R|=|C|=v$, Hall's theorem gives a perfect matching for the bipartite graph. 
Consequently, there is a bijection
\[
\pi: R \to C
\]
such that
\[
A_{i,\pi(i)}=0 \text{ for every }i\in R.
\]
Relabel old column~$\pi(i)$ as column~$i$, and define
\[
A'_{hi}:=A_{h,\pi(i)}.
\]
Then
\[
A'_{ii}=A_{i,\pi(i)}=0 \text{ for every }i\in R,
\]
so $A'$ is a zero-diagonal representative of the SBIBD.
\end{proof}

\subsection{An obstruction from a reciprocal pair for $p=2$}
\label{app:p2-example}
The complement of the Fano plane has the following zero-diagonal
$\SBIBD(7,4,2)$ representative~\cite{colbourn2010}:
\[
A=\begin{pmatrix}
0 & 1 & 1 & 1 & 0 & 0 & 1 \\
0 & 0 & 1 & 0 & 1 & 1 & 1 \\
0 & 1 & 0 & 1 & 1 & 1 & 0 \\
1 & 1 & 1 & 0 & 1 & 0 & 0 \\
1 & 1 & 0 & 0 & 0 & 1 & 1 \\
1 & 0 & 0 & 1 & 1 & 0 & 1 \\
1 & 0 & 1 & 1 & 0 & 1 & 0
\end{pmatrix}.
\]
Every row and column contains exactly four entries equal to~$1$, and any two
distinct rows contain $1$'s in exactly two common columns. In particular,
$\mathcal{S}_{\mathrm{tgt}}(0)=\{1,2,3,6\}$ and
$\mathcal{S}_{\mathrm{tgt}}(3)=\{0,1,2,4\}$, so
$B_{03}=\{1,2\}$ and nodes $0,3$ form a reciprocal pair. Choose an object
$x$ on node~$0$ with targets $\{1,2,3\}$ and an object $y$ on node~$3$
with targets $\{0,1,2\}$. At nodes~$1$ and~$2$, place $x$ and~$y$ in the
same parity. If nodes~$0$ and~$3$ fail, the parities selected at nodes $3$
and $0$, respectively, are lost, whereas both surviving equations reduce to
$x\XOR y=\gamma$. Their coefficient matrix has rank one. Thus
$q=3=\lambda+1$ fails, and Proposition~\ref{prop:p2-exact} gives
$q^*(A,2)=4$.

\subsection{A representative with a private target permutation on a pairwise separated triple}
\label{app:p3-example}

The following is a zero-diagonal representative of an $\SBIBD(16,6,2)$:
\[
{\setlength{\arraycolsep}{1.5pt}
\renewcommand{\arraystretch}{0.9}
A=\left(\begin{array}{*{16}{c}}
0&1&0&0&0&0&1&0&0&0&1&1&1&0&1&0\\
0&0&0&0&1&0&0&1&1&0&0&1&1&0&0&1\\
0&0&0&1&0&0&0&0&0&1&0&0&1&1&1&1\\
1&0&1&0&0&1&0&0&0&0&0&1&0&0&1&1\\
1&0&0&0&0&0&1&0&1&0&1&0&0&1&0&1\\
1&1&0&0&1&0&0&1&0&0&0&0&0&1&1&0\\
1&1&0&1&0&0&0&0&1&1&0&1&0&0&0&0\\
0&1&1&0&0&1&0&0&1&0&0&0&1&1&0&0\\
0&1&0&1&0&1&0&1&0&0&1&0&0&0&0&1\\
0&0&0&1&1&1&1&0&1&0&0&0&0&0&1&0\\
0&0&0&0&0&1&1&1&0&1&0&1&0&1&0&0\\
1&0&1&1&0&0&1&1&0&0&0&0&1&0&0&0\\
0&1&1&0&1&0&1&0&0&1&0&0&0&0&0&1\\
0&0&1&0&0&0&0&1&1&1&1&0&0&0&1&0\\
0&0&1&1&1&0&0&0&0&0&1&1&0&1&0&0\\
1&0&0&0&1&1&0&0&0&1&1&0&1&0&0&0
\end{array}\right).}
\]
A direct count shows that every row and column has weight~$6$, while any two
distinct rows meet in exactly two columns. For
$F=\{0,1,4\}$, the relevant target sets are
\begin{align*}
\mathcal{S}_{\mathrm{tgt}}(0)&=\{1,6,10,11,12,14\},\\
\mathcal{S}_{\mathrm{tgt}}(1)&=\{4,7,8,11,12,15\},\\
\mathcal{S}_{\mathrm{tgt}}(4)&=\{0,6,8,10,13,15\}.
\end{align*}
Consequently,
\[
B_{01}=\{11,12\},\ B_{04}=\{6,10\},\ B_{14}=\{8,15\}.
\]
These three sets are disjoint, so $F$ is pairwise separated. Moreover,
$0\to1\to4\to0$ is a private target permutation of~$F$, since $A_{0,1}=A_{1,4}=A_{4,0}=1$
and
$d_F(1)=d_F(4)=d_F(0)=1.$

To realize the obstruction at $q=5=\lambda(p-1)+1$, choose objects
$x_0,x_1,x_4$ on nodes $0,1,4$, respectively, with target sets
\begin{align*}
T_{x_0}&=\{1,6,10,11,12\},\\
T_{x_1}&=\{4,8,11,12,15\},\\
T_{x_4}&=\{0,6,8,10,15\}.
\end{align*}
At each target in $B_{ij}$, place $x_i$ and $x_j$ in the same parity, and
fill the remaining parity positions with objects on alive primary nodes.
Now fail the nodes in~$F$. The three cycle parities, hosted at nodes
$1,4,0$, are lost. After the alive encodees have been subtracted, the six
surviving equations have coefficient matrix
\[
M=
\begin{pmatrix}
1&1&0\\
1&1&0\\
1&0&1\\
1&0&1\\
0&1&1\\
0&1&1
\end{pmatrix},\ \text{with columns indexed by }(x_0,x_1,x_4).
\]
Thus $M\cdot (1,1,1)^{\mathsf T}=0$ and $\operatorname{rank}(M)=2<3$, so the
three lost objects are not uniquely determined. Therefore $q=5$ fails,
whereas Theorem~\ref{thm:main} gives recovery at $q=6$. Hence
$q^*(A,3)=6$.

\subsection{A relabeled representative with no private target permutation on any pairwise separated triple}
\label{app:p3-no-private-example}

We next relabel the columns of the same representative in Appendix \ref{app:p3-example} while keeping the
row labels fixed. Define the permutation~$\pi$ by
\[
\begin{split}
\bigl(\pi(0),\pi(1),\cdots,\pi(15)\bigr)
={}&(2,3,11,8,4,5,7,6,9,0,10,1,12,13,15,14),
\end{split}
\]
and let
\[
A'_{it}:=A_{i,\pi(t)},
\text{ for all }i,t\in\{0,1,\cdots,15\}.
\]
Direct check shows
\[
A_{i,\pi(i)}=0, \text{ for every }i\in\{0,1,\cdots,15\},
\]
so $A'$ is also zero-diagonal. Since $A'$ is obtained by permuting the
columns of~$A$, it is another zero-diagonal representative of the same
$\SBIBD(16,6,2)$.

We next verify that no pairwise separated set of size~$3$ admits a private
target permutation under~$A'$. For a node set~$G$, define its incidence degree with respect to~$A'$ by
\[
d'_G(t):=\sum_{i\in G}A'_{it}
\]
and also let
\[
s(G):=
\left|
\left\{
i\in G:
\text{some }t\in G\text{ satisfies }
A'_{it}=1\text{ and }d'_G(t)=1
\right\}
\right|.
\]
A direct enumeration of the $\binom{16}{3}=560$ node triples finds
$240$ pairwise separated triples. Among them, the numbers of triples with
$s(G)=0$, $s(G)=1$, and $s(G)=2$ are respectively $37, 136, 67$ ($37+136+67 = 240$).
In particular, no pairwise separated triple has $s(G)=3$. Since private
targets belonging to different sources are necessarily distinct, a triple
admits a private target permutation exactly when $s(G)=3$. Hence no
pairwise separated triple under~$A'$ admits such a permutation.

Proposition~\ref{prop:private-target-exact} therefore gives
\[
q^*(A',3)=\lambda(3-1)+1=5.
\]
The matrices $A$ and~$A'$ are two zero-diagonal representatives of the same
$\SBIBD(16,6,2)$ obtained from different affinity labelings, but
\[
q^*(A,3)=6,\ q^*(A',3)=5.
\]
Thus the universal replica threshold depends on the affinity labeling as well, not only on the underlying SBIBD.

\vfill

\end{document}